\providecommand{\customgenericname}{}
\newcommand{\newcustomtheorem}[2]{%
  \newenvironment{#1}[1]
  {%
   \renewcommand\customgenericname{#2}%
   \renewcommand\theinnercustomgeneric{##1}%
   \innercustomgeneric
  }
  {\endinnercustomgeneric}
}
\def\eqref#1{equation~(\ref{#1})}
\def\1{\bm{1}}
\DeclareMathAlphabet{\mathsfit}{\encodingdefault}{\sfdefault}{m}{sl}
\SetMathAlphabet{\mathsfit}{bold}{\encodingdefault}{\sfdefault}{bx}{n}
\newcommand{\E}{\mathbb{E}}
\newcommand{\R}{\mathbb{R}}
\newcommand{\argmax}{\operatornamewithlimits{argmax}}
\newtheorem*{rep@theorem}{\rep@title}
\newcommand{\newreptheorem}[2]{%
\newenvironment{rep#1}[1]{%
 \def\rep@title{#2 \ref{##1}}%
 \begin{rep@theorem}}%
 {\end{rep@theorem}}}
\newtheorem{theorem}{Theorem}
\newtheorem{definition}{Definition}
\newtheorem{example}{Example}
\newcommand{\bE}{\mathbb{E}}
\newcommand{\bb}{\mathbf{b}}
\newcommand{\bd}{\mathbf{d}}
\newcommand{\bA}{\mathbf{A}}
\newcommand{\bC}{\mathbf{C}}
\newcommand{\bN}{\mathbf{N}}
\newcommand{\bM}{\mathbf{M}}
\newcommand{\bx}{\mathbf{x}}
\newcommand{\by}{\mathbf{y}}
\newcommand{\bw}{\mathbf{w}}
\newcommand{\bv}{\mathbf{v}}
\newcommand{\bX}{\mathbf{X}}
\newcommand{\btheta}{\boldsymbol\theta}
\newcommand{\cD}{\mathcal{D}}
\newcommand{\cC}{\mathcal{C}}
\newcommand{\cP}{\mathcal{P}}
\newcommand{\cY}{\mathcal{Y}}
\newcommand{\defword}[1]{\textbf{\boldmath{#1}}}
\newcommand{\ie}{{\it i.e.},~}  
\newcommand{\eg}{{\it e.g.},~}  
\newtheorem*{theorem-nonumber}{Theorem}
\newtheorem*{lemma-nonumber}{Lemma}
\definecolor{darkgreen}{RGB}{0,125,0}
\definecolor{darkblue}{RGB}{0,0,125}
\newcounter{mlNoteCounter}
\newcounter{klNoteCounter}
\newcounter{gpilNoteCounter}
\newcounter{imgempNoteCounter}
\newcounter{mdNoteCounter}
\title[Soft Condorcet Optimization]{Soft Condorcet Optimization for Ranking of General Agents}
\author{Marc Lanctot}
\affiliation{
  \institution{Google DeepMind}
  \city{Montreal}
  \country{Canada}
  }
\author{Kate Larson}
\affiliation{
  \institution{Google DeepMind,\\ University of Waterloo}
  \city{Waterloo}
  \country{Canada}
  }
\author{Michael Kaisers}
\affiliation{
  \institution{Google DeepMind}
  \city{Paris}
  \country{France}
}
\author{Quentin Berthet}
\affiliation{
  \institution{Google DeepMind}
  \city{Paris}
  \country{France}
}
\author{Ian Gemp}
\affiliation{
  \institution{Google DeepMind}
  \city{London}
  \country{United Kingdom}
}
\author{Manfred Diaz}
\affiliation{
  \institution{Mila, University of Montreal}
  \city{Montreal}
  \country{Canada}
}
\author{Roberto-Rafael Maura-Rivero}
\affiliation{
  \institution{Google DeepMind, LSE}
  \city{London}
  \country{United Kingdom}
}
\author{Yoram Bachrach}
\affiliation{
  \institution{Meta$^1$}
  \city{London}
  \country{United Kingdom}
}
\author{Anna Koop}
\affiliation{
  \institution{Google DeepMind}
  \city{Montreal}
  \country{Canada}
}
\author{Doina Precup}
\affiliation{
  \institution{Google DeepMind}
  \city{Montreal}
  \country{Canada}
}
\begin{abstract}
Driving progress of AI models and agents requires comparing their performance on standardized benchmarks;
for {\it general} agents, individual performances must be aggregated across a potentially wide variety of different tasks.
In this paper, we describe a novel ranking scheme inspired by social choice frameworks, called Soft Condorcet Optimization (SCO), to compute the optimal ranking of agents: the one that makes the fewest mistakes in predicting the agent comparisons in the evaluation data.
This optimal ranking is the maximum likelihood estimate  when evaluation data (which we view as votes) are interpreted as noisy samples from a ground truth ranking, a solution to Condorcet's original voting system criteria.
SCO ratings are maximal for Condorcet winners when they exist, which we show is not necessarily true for the classical rating system Elo. 
We propose three optimization algorithms to compute SCO ratings and evaluate their empirical performance.
When serving as an approximation to the Kemeny-Young voting method, SCO rankings are on average 0 to 0.043 away from the optimal ranking in normalized Kendall-tau distance across 865 preference profiles from the PrefLib open ranking archive. 
In a simulated noisy tournament setting, SCO achieves accurate approximations to the ground truth ranking and the best among several baselines when 59\% or more of the preference data is missing. Finally, SCO ranking provides the best approximation to the optimal ranking, measured on held-out test sets, in a problem containing 52,958 human players across 31,049 games of the classic seven-player game of Diplomacy.
\end{abstract}
\keywords{agent evaluation; social choice theory; ranking}
\newcommand{\BibTeX}{\rm B\kern-.05em{\sc i\kern-.025em b}\kern-.08em\TeX}
\begin{document}


\pagestyle{fancy}
\fancyhead{}


\maketitle 


\section{Introduction}

\footnotetext[1]{Work performed while at Google DeepMind.}

Progress in the field of artificial intelligence has been driven by measuring the performance of agents on common benchmarks and challenge problems~\cite{Samuel59,DeepBlue,tdgammon,arimaa,Silver16Go,Vinyals19Alphastar,Cicero}. 
In machine learning, common benchmarks like the UCI data set repository allowed direct comparisons of supervised learning algorithms~\cite{UCI}. Competitions, such as ImageNet, led to breakthroughs in deep learning~\cite{imagenet}. 

All of these examples require comparing agents (or models). Original success stories such as DeepBlue, TD-Gammon, and AlphaGo focused on a single domain. In the past ten years, agents have become increasingly more generally capable.
AlphaZero extended application of AlphaGo to chess and Shogi~\cite{Silver18AlphaZero}. 
The Arcade Learning Environment~\cite{bellemare13arcade}, which steered much of the agent development in deep reinforcement learning, evaluated agents across 57 different Atari games. 
Recently, language models have been evaluated across suites of tasks such as in HELM~\cite{liang2022holistic}, BIG-bench~\cite{srivastava2023beyond} AgentBench~\cite{liu2023agentbench}, and via a public leaderboard such as Chatbot Arena driven by human voting~\cite{chiang24chatbot}.
Answering simple questions for these generally capable agents, such as ``Which is the best agent?'' or ``Is agent $X$ better than agent $Y$?'' or ``What is the relative ranking of agents $X$, $Y$, and $Z$?'' become increasingly more difficult when aggregating over many different contexts:
how agents are scored may vary wildly across tasks, 
data collected for evaluation may not be balanced evenly across tasks (or agents), and classical rating systems were simply not designed for this use case. 

To address these problems, recent ranking methods such as Vote'N'Rank~\cite{rofin-etal-2023-votenrank} and Voting-as-Evaluation (VasE)~\cite{lanctot2023evaluating} use voting methods to aggregate results across tasks. 
Using computational social choice as a basis for ranking agents has several benefits: well-studied consistency properties of the voting methods are inherited, they do no require score normalization, and are less sensitive to score values and agent population than game-theoretic evaluation schemes~\cite{Balduzzi18ReEval}.
However, classic voting schemes, and related tournament solutions,  typically assume that the data (e.g. agent comparisons) is complete. This assumption is not necessarily valid in the agent evaluation setting. While there has been research on identifying “necessary and possible winners”  when there is incomplete voting or comparison data, the results are mixed~\cite{pini2011possible,xia2011determining,aziz2015possible} and most of the findings are focused  on identifying top-ranked agents, not ranking  all agents as is our focus.  

In this paper, we introduce a new ranking scheme for general agents inspired by the interpretation of voting rules as maximum likelihood estimators~\cite{ConitzerSandholm05}.
Starting with Condorcet's original model of voting~\cite[Chapter 8]{Brandt16Handbook}, Young showed that the maximum likelihood estimate (MLE) of the true ranking is the one that minimizes the sum of Kendall-tau distances to all the votes.
Soft Condorcet Optimization (SCO) solves an optimization problem
that assigns a numerical {\it rating} (score) $\theta_a$ to each agent (alternative) $a$. 
SCO then treats these ratings as a parameter vector, the votes as a data set, and defines a differentiable loss function as the objective, which can be optimized in several different ways.
The final ranking of agents is obtained by sorting the ratings.

In summary, this paper makes the following contributions.

\noindent {\bf 1. SCO ranking scheme} with the following properties:
(1a) Three optimization methods to find ratings and corresponding rankings: gradient descent applied to a soft Kendall-tau distance (``sigmoid loss''), or a Fenchel-Young loss (perturbed optimization)~\cite{Berthet20}; or solving a sigmoidal program with a branch-and-bound method~\cite{Udell2014MaximizingAS}.
(1b) Online form that can update ratings, and thus rankings, from individual outcomes as evaluation data arrives.
(1c) Theorem~\ref{th:sigmoid-condorcet} guarantees that the top-ranked agent by SCO ratings according to the sigmoid loss is the Condorcet winner when one exists.

\noindent {\bf 2. Empirical evaluations} that demonstrate the following:
(2a) SCO ranking using sigmoid loss solves a failure mode of classical Elo rating system which may top-rank an agent that is not a Condorcet winner even when one exists.
(2b) SCO can serve as an approximation to the Kemeny-Young voting method, indeed empirically finding low approximation error to the optimal ranking: on average 0 to 0.043 away in normalized Kendall-tau distance across 865 preference profiles from the PrefLib~\cite{Preflib_MaWa13a}.
(2c) In a noisy tournament setting with sparse data, SCO approximates the true ranking best when a large proportion (59\% or more) of the data is missing.
(2d) SCO ratings are closer to optimal rankings than Elo and voting-as-evaluation methods on held out test sets over 31,049 human Diplomacy games played by 52,958 players.

\section{Background}

In this section, we describe the building blocks and introduce some basic terminology required to understand our method.

\subsection{Evaluation of General Agents}

We first paraphrase several key descriptions from~\cite{Balduzzi18ReEval,lanctot2023evaluating}.
The problem of evaluating agents is that of ranking agents according to their skill. 
Skill can be determined in several ways.
In the {\bf Agent-versus-Task (AvT)} setting, agents compete individually in different tasks and compare outcomes (scores) to each other in each task (\eg language models and the various metrics for assessing their abilities).
In the {\bf Agent-versus-Agent (AvA)} setting, agents directly compete against each other (\eg online games such as chess or Diplomacy).

\subsubsection{Classical Evaluation}
\label{sec:background-classical-eval}

Elo is a classic rating system that uses a simple logistic model learned from win/loss/draw outcomes~\citep{Elo78}. 
A rating, $r_i$, is assigned to each player $i$ such that the probability of player $i$ beating player $j$ is predicted as $\hat{p}_{ij} = \frac{1}{1 + 10^{(r_j - r_i)/400}}$. 
While Elo was designed specifically to rate players in the two-player zero-sum, perfect information game of Chess, it has been widely applied to other domains, including in evaluation of large language models~\citep{zheng2023judging}.
TrueSkill~\citep{TrueSkill06} and bayeselo~\citep{Coulom05bayeselo} are rating systems based on similar foundations (Bradley-Terry models of skill) that also model uncertainty over ratings using Bayesian methods.

Elo has a number of positive qualities. First, it is a simple rule. Second, it can be used to estimate win rates between any two agents. Third, it can be easily employed {\it online}, \ie to modify players' ratings from the result of a single game. It is also a special case of logistic regression (for details, see 
Appendix~\ref{sec:relationship-to-elo}). Hence, Elo has been widely applied and is a common default choice for evaluation of agents.
However, Elo has a number of well-known drawbacks~\cite{Balduzzi18ReEval,lanctot2023evaluating, bertrand2023elolim}. Of particular interest is the incompatibility of Elo with the concept of a Condorcet winner from social choice theory; examples are summarized in Section~\ref{sec:eval-warmup}.

\subsubsection{Voting as Evaluation of General Agents}
\label{sec:background-vase}

Another way to evaluate agents is to use social choice theory, called Voting-as-Evaluation (VasE) recently proposed in~\cite{lanctot2023evaluating}. In VasE, the alternatives correspond to general agents and votes to assessments of their performance. In the AvT setting, agents are ordered based on their performance on different tasks (such as each game in the Atari Learning Environment~\cite{bellemare13arcade} or on different benchmarks for large language models~\cite{liu2023agentbench}). In the AvA setting, agents compete directly in a multiagent environment, such as multiplayer games (like chess or poker), and each game outcome corresponds to a ranking over a subset of agents. Chatbot Arena~\cite{zheng2023judging}, where language models compete head-to-head to provide the best answer to the same question, is another instance of the AvA setting. Casting agent evaluation as an application of computational social choice provides the benefit of robustness in the form of Condorcet consistency, clone/composition consistency, agenda consistency, and/or population consistency depending on the choice of voting rule used to evaluate agents. 
However, it is unclear how well these methods perform when the data is missing or unevenly distributed, which often occurs in the agent evaluation setting.
SCO is motivated similarly to VasE and, as such, will be empirically assessed mainly for agent evaluation. To make the connection to ideas from the social choice literature, we will often use voting language when discussing SCO and its use in evaluation. Relevant terminology is introduced later in the paper. 

\subsection{Permutation and Ranking Distances}

We now define distance metrics over rankings that will play a key role when describing our method and loss function.
Informally, the Kendall-tau distance counts the number of pairwise disagreements between permutations. 

\begin{definition}
Let $S_1, S_2$ be finite sets of elements such that $S_1 \subseteq S_2$.
Let $\pi_1, \pi_2$ be permutations over elements in $S_1$ and $S_2$, respectively.
The \defword{Kendall-tau distance} between two permutations is defined as 
\begin{equation}
K_d(\pi_1, \pi_2) = \sum_{\{i,j\} \in \cC_2(S_1)} \bar{K}_{i,j}(\pi_1, \pi_2),
\end{equation}
where $\cC_2(S)$ is the set of unordered pairs of $S$ (combinations of size 2), 
$\bar{K}_{i,j}(\pi_1, \pi_2) = 0$ if $i$ and $j$ are in the same order in $\pi_1$ and $\pi_2$, and $\bar{K}_{i,j}(\pi_1, \pi_2) = 1$ otherwise.
\end{definition}

Note that this definition allows one set to be a subset of the other, which is more general than the standard definition; this distinction is necessary for the evaluation metric used in Section~\ref{sec:eval-diplomacy}, and corresponds to the standard definition when $S_1 = S_2$.

Since the maximum distance is ${|S_1| \choose 2} = \frac{|S_1| (|S_1| - 1)}{2}$, this can be easily normalized to be in $[0,1]$:

\begin{definition}
Let $S_1, S_2$ be finite sets of elements such that $S_1 \subseteq S_2$.
Let $\pi_1, \pi_2$ be permutations over elements in $S_1$ and $S_2$, respectively.
The \defword{normalized Kendall-tau distance} $\pi_1$ and $\pi_2$ is defined as
\begin{equation}
K_n(\pi_1, \pi_2) = \frac{2 K_d(\pi_1, \pi_2)}{|S_1| (|S_1| - 1)}.
\end{equation}
\end{definition}

\subsection{Social Choice Theory}
\label{sec:background-sct}

\begin{figure}[t!]
    \centering
    \includegraphics[width=0.38\textwidth]{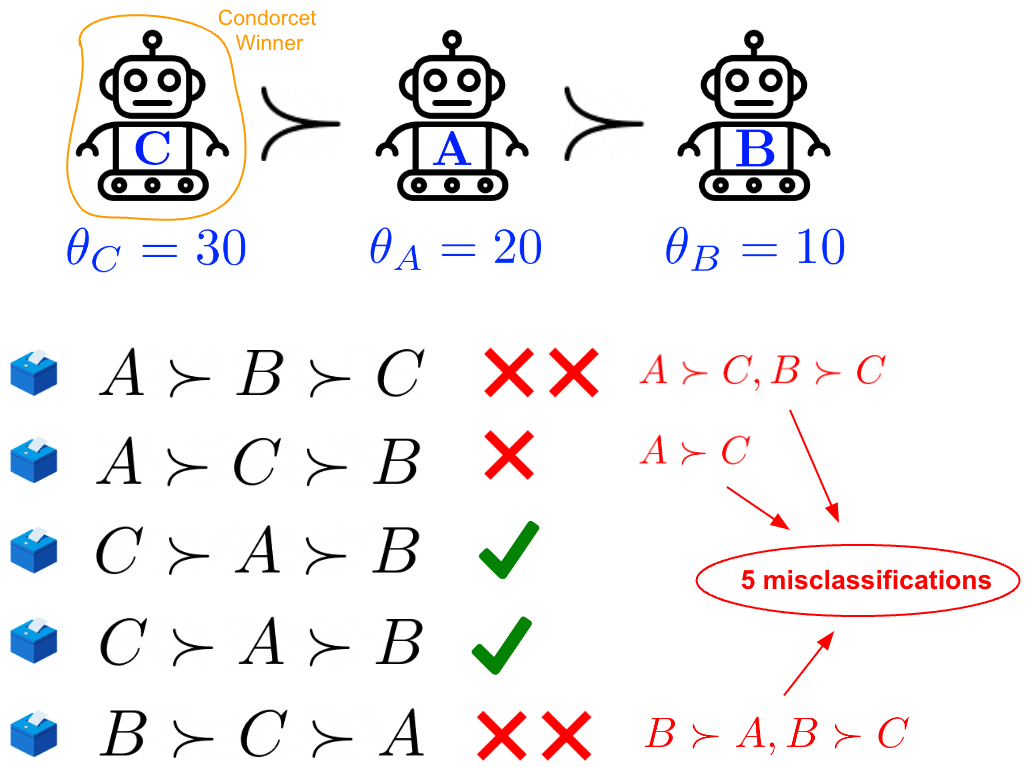}
    \caption{An example calculation of the Kendall-tau distances of each vote in the preference profile from Table~\ref{tab:example} to the optimal ranking $R = C \succ A \succ B$. The sum of the distances is 5. This sum for any other ranking $R' \not= R$ is greater than 5.}
    \label{fig:example-fig}
\end{figure}

A \defword{voting scheme} is defined as $\langle A, V, f\rangle$ 
where $A=\{a_1,\ldots, a_m\}$ is the set of \defword{alternatives} (agents),  
$V=\{v_1,\ldots, v_n\}$ is the set of \defword{voters}, 
and $f$ is the \defword{voting rule} that determines how votes are aggregated.
Voters have \defword{preferences} over alternatives: $a_1\succ_{v_i} a_2$ indicates voter $v_i$ strictly prefers alternative $a_1$ over alternative $a_2$.
In this paper, we assume strict preferences only, however the ideas can easily be extended to the case that allow weak preference (including ties/indifference between alternatives, \eg $a_1 \succeq s_2$).
These preferences induce total orders over alternatives, which we denote by $\mathcal{L}$.  
A \defword{preference profile}, $[\succ]\in \mathcal{L}^n$, is a vector specifying the preferences of each voter in $V$. 
It can be useful to summarize the preference profile in a \defword{voter preference matrix} $\bN$ or \defword{vote margin matrix} $\bM$.
The preference count $N(x,y)$, $x, y \in A$ is the number of voters in $[\succ]$ that strictly prefer $x$ to $y$. 
The vote margin is the difference in preference count: $\delta(x, y) = N(x,y) - N(y,x)$.
Table~\ref{tab:example} shows a preference profile and its resulting preference matrix, $\bN = (N(i,j))_{i,j}$, and margin matrix $\bM = (\delta(i,j))_{i,j} = \bN - \bN^T$.

The central problem of social choice theory is how to aggregate preferences of a population so as to reach some collective decision.
A voting rule that determines the ``winner'' (a non-empty subset, possibly with ties), is a \defword{social choice function} (SCF). A voting rule that returns an aggregate ranking (total order) over all the alternatives is a \defword{social welfare function} (SWF).
Much of the social choice literature focuses on understanding what properties different voting rules support.

The \defword{Condorcet winner} defines a fairly intuitive concept: $A$ is the (strong) Condorcet winner if the number of votes where $A$ is ranked higher than $B$ is greater than vice versa for all other alternatives $B$.
A weak Condorcet winner wins or ties in every head-to-head pairing.
Formally, given a preference profile, $[\succ]$, 
a weak Condorcet winner is an alternative $a^* \in A$ such that $\forall a' \in A, \delta(a^*, a') \ge 0$. 
If the inequality is strict for all pairs except  $(a^*, a^*)$ then we call it a strong Condorcet winner.
In the example shown in Table~\ref{tab:example}, alternative $C$ is the strong Condorcet winner. It dominates $A$ since three out of the five voters prefer $C$ to $A$. A similar situation holds when $C$ is compared to alternative $B$. 
While many have argued that this definition captures the essence of the correct collective choice~\citep{deCondorcet1785}, 
in practice  preference profiles may have no Condorcet winner. Condorcet-consistent voting schemes (\eg Kemeny-Young introduced next) return a Condorcet winner when it exists, but differ on how they handle settings with no Condorcet winners.

\begin{table}[t]
\begin{tabular}{ll}
\hline
1: & $A \succ B \succ C$\\
1: & $A \succ C \succ B$\\
2: & $C \succ A \succ B$\\
1: & $B \succ C \succ A$\\
\hline
\end{tabular}
\hspace{0.3cm}
\begin{tabular}{|l|lll|}
\hline
  & $A$ & $B$ & $C$\\
\hline
$A$ & 0 & 4 & 2\\
$B$ & 1 & 0 & 2\\
$C$ & 3 & 3 & 0\\
\hline
\end{tabular}
\hspace{0.3cm}
\begin{tabular}{|l|lll|}
\hline
  & $A$ & $B$ & $C$\\
\hline
$A$ & 0 & 3 & -1\\
$B$ & -3 & 0 & -1\\
$C$ & 1 & 1 & 0\\
\hline
\end{tabular}
\caption{Left: an example preference profile with five votes~\cite[Figure 1]{lanctot2023evaluating}. The number on the left of the colon represents the number of votes of that type.
Middle: the voter preference matrix, $\bN$. Right: the voter margin matrix, $\bM$. 
\label{tab:example}}
\end{table}

\subsubsection{Kemeny-Young Voting Method}
\label{sec:kem-young-voting}

The voting method was initially proposed by Kemeny~\cite{Kemeny59}. Later, its properties were characterized by Young \& Levenglick ~\cite{young1978consistent}.
Let each ranking (total order) be represented as a permutation $\pi$ over $|A|$ alternatives. 
Define the Kemeny score of permutation $\pi$ as $\textsc{KemenyScore}(\pi) = \sum_{(i,j), i < j} N(\pi[i], \pi[j])$.
The Kemeny rule returns the ranking that maximizes this Kemeny score: $\argmax_{\pi \in \Pi(|A|)} \textsc{KemenyScore}(\pi)$.
Kemeny-Young is {\it Condorcet-consistent}: if a Condorcet winner exists, it will be top-ranked by Kemeny-Young. It also satisfies the majority criterion, the Smith criterion~\cite{Smith73}, and monotonicity. 
The Kemeny rule always returns an \defword{optimal ranking}, \ie one whose sum of Kendall-tau distances to the votes is minimal, but its computational complexity is prohibitively expensive when $m$ is large.

\subsection{Learning-to-Rank}
\label{sec:background-learning-to-rank}

Another related field is that of learning-to-rank~\cite{Liu11}. The canonical example is that a user enters a keyword (query) and the problem is to retrieve the most relevant documents in a database, ranked by relevance to the keyword. 
There are several algorithms that learn to rank; Google's PageRank, which powers their search engine, is one example.
Our setting can be characterized by a learning-to-rank problem with a constant keyword (or no keyword) as there is no query.
An important class of statistical models in this setting is random utility models~\cite{Xia19}.
In random utility models, assessments are perceived as some ground truth assessment plus some noise.

Perturbed optimizers~\cite{Berthet20,Blondel20} transform non-differentiable functions (such as sorting and ranking) into smoothed versions by adding noise; these perturbed functions can then be optimized by gradient descent.
This is part of a growing effort to allow end-to-end training through discrete operators, using classical stochastic smoothing and perturbation approaches \cite{gumbel1954statistical, hazan2016perturbations}.  Included are optimal transport, clustering, dynamic time-warping and other dynamic programs \citep{cuturi2013sinkhorn, cuturi2017soft, mensch2018differentiable, vlastelica2019differentiation, paulus2020gradient, sander2023fast, stewart2023differentiable} applied in fields such as computer vision, audio processing, biology, and physical simulators \citep{cordonnier2021differentiable, kumar2021groomed, carr2021self, le2021differentiable, baid2023deepconsensus, llinares2023deep} and other optimization algorithms \citep{dubois2022fast}.

\section{Soft Condorcet Optimization}
\label{sec:sco}
Soft Condorcet Optimization (SCO) is a ranking scheme for evaluation of general agents inspired by social choice theory.  
An SCO ranking is derived from and represented by {\bf SCO ratings}, $\theta_a$, for each alternative $a \in A$. 
The rating $\theta_a \in [\theta_{\min}, \theta_{\max}]$ serves only to determine $a$'s relative order compared to other alternatives in the ranking, such that $a\succ b$ if and only if $\theta_a > \theta_b$. The ranking is induced by numerically sorting these ratings (\eg see Figure~\ref{fig:example-fig}). 
We then formulate an optimization problem by carefully constructing a loss function that that penalizes discrepancies or misclassifications in the ordinal relationships between alternatives.

An SCO rating is a numerical value representing an agent's level of skill.
SCO is closely related to several prior works: Elo~\cite{Elo78}, probabilistic ranking~\cite{diaconis1988rankprob, marden1995rankdata, alvo2014rankprob}, and perturbed optimizers~\cite{Blondel20,Berthet20}.
We elaborate on these relationships in Appendix~\ref{sec:relationships-to-prior-work}.

\subsection{SCO Ratings and the Sigmoid Loss}
\label{sec:sco-grad-descent}

In this section, we first explain the {\bf sigmoid loss} function. For some set of ratings $\btheta$, this loss function quantifies the amount of error: the level of disagreement between the specific values of each rating and the data (preference profiles obtained through agent evaluation). The goal is then to find an assignment of ratings that minimizes this loss, \ie the set of ratings that best explain the preference data. 

Let $\Pi(A)$ denote the set of permutations over alternatives $A$. We will call the data we work with ``votes'' to emphasize that we are working with (partial) rankings over alternatives, and borrow terminology from preferences and social choice. In particular, we view the entire dataset to be a collection of votes and so will refer to  it as \emph{preference profile}, $[\succ]$.
Let $v \in [\succ]$ refer to each vote in the profile, where $v$ is a permutation over subsets of $A$, with length $|v|$. For a vote $v$, denote $v[i]$ as the alternative in position $i$ such that vote $v$ is represented as: $v[0] \succ v[1] \succ \cdots \succ v[|v|-1]$.

The ultimate goal is to find an {\bf optimal ranking} $R$:
\begin{definition}
\label{def:optimal-ranking}
Given some a profile $[\succ]$ (\ie set of votes), an {\bf optimal ranking} minimizes the sum of the Kendall-tau distances \[R = argmin_{R \in \Pi(A)} \sum_{v \in [\succ]} K_d(v, R).\]
\end{definition}
\noindent Let index pairs $I_2(v) = \{ (i,j)~|~i,j \in \{ 0, 1, \cdots, |v| - 1\} \mbox{ and } i < j \}$. Given a preference profile and ratings $\btheta$, we define a {\bf discrete loss}:
\begin{equation}
L([\succ], A, V, \btheta) = \sum_{v \in [\succ]} \sum_{(i,j) \in I_2(v)} D_v(\theta_{v[i]}, \theta_{v[j]}),   \label{eq:ktd-loss}
\end{equation}
where $D_v(\theta_a, \theta_b)$ is a function that measures the {\it discrepancy} between the positions of alternatives $a$ and $b$:
\begin{equation}
D_v(\theta_a, \theta_b) = \left\{ \begin{array}{ll}
                                1 & \mbox{if $\theta_b - \theta_a > 0$ in $v$}; \\
                                0 & \mbox{otherwise},\end{array} \right.,    \label{eq:discrepancy}
\end{equation}
then the minimum of the discrete loss function corresponds to a ratings assignment $\btheta$ such that ranking induced by $\btheta$  minimizes the sum of the Kendall-tau distances to all the votes.

\begin{example}
\label{eg:ktd-loss-example}
Recall the example from Figure~\ref{fig:example-fig} showed the computation of the sum of Kendall-tau distances from the ranking $R = C \succ A \succ B$ to votes depicted in preference profile in Table~\ref{tab:example}.
In this example, we show how the value is computed under rating vector:
\[ \btheta = (\theta_A, \theta_B, \theta_C) = (20, 10, 30). \]
Let $[\succ]$ be the preference profile depicted in Table~\ref{tab:example}. We now show that the main loss function (Equation~(\ref{eq:ktd-loss})) leads to same value as in Figure~\ref{fig:example-fig}.
The outer sum of Equation~\ref{eq:ktd-loss} enumerates the votes, which we will assume is in the same order as listed in Figure~\ref{fig:example-fig}.
The inner sum computes the Kendall-tau distance from the vote $v$ to the ranking $R$ induced by $\btheta$ (red exes in Figure~\ref{fig:example-fig}).
For the first vote $v = A \succ B \succ C$, the discrepancy function $D$ outputs 1 two times: once with pair $(i, j) = (0, 2)$ and once with pair $(i, j) = (1, 2)$ because the preferences between agents $(A, C)$ and agents $(B, C)$ disagree between $\btheta$ and $v$, so the inner sum for the first vote is 2.
Similarly for the other votes: they correspond precisely to the same values as in Figure~\ref{fig:example-fig}. Hence, the loss function (Equation~(\ref{eq:ktd-loss})) is simply computing the sum of Kendall-tau distances between the ranking induced by $\btheta$ and all the votes.
\end{example}

Since $D$ is a step function discontinuous at $\theta_a = \theta_b$, it is not differentiable in $\btheta$.
To solve this, we replace $D$ with a smooth approximation, \ie the logistic function
\begin{equation}
\tilde{D}_v(\theta_a, \theta_b) = \sigma(\theta_b - \theta_a) = \frac{1}{1 + e^{(\theta_a - \theta_b)/\tau}}, 
\label{eq:sigmoid-discrepancy-func}
\end{equation}
leading to a soft Kendall-tau (differentiable) \defword{sigmoid loss}:
\begin{equation}
\tilde{L}([\succ], A, V, \btheta) = \sum_{v \in [\succ]} \sum_{(i,j) \in I_2(v)} \tilde{D}_v(\theta_{v[i]}, \theta_{v[j]}). \label{eq:sco-loss}
\end{equation}
The sigmoid loss is a differentiable version of the Kendall-tau distance sum and acts as a smooth approximation to the discrete loss.

Note that while we focus on the sigmoid loss as a soft approximation to Kendall-tau distance in this paper, the same approach can be used for other ranking distances that can be approximated by differentiable functions, such as Spearman's footrule distance~\cite{Diaconis77}.

\subsection{Sigmoid Loss Minimization}
\label{sec:sco-sigmoid-loss-min}

SCO ratings can be computed using the sigmoid loss in two ways.

\subsubsection{Gradient Descent}
\label{sec:sco-via-sgd}

The most straight-forward way is to apply gradient descent~\cite{HazanOCObook}: update ratings by following the gradient of the sigmoid loss, as shown in Algorithm~\ref{alg:sco}.
After applying the gradient to the ratings on line~\ref{alg:update-line}, the ratings may escape the bounded constraint space so we  project them back. This is a straight-forward application of standard  gradient descent~\cite[Chapter 2]{HazanOCObook}.
A common variant is stochastic gradient descent (SGD) which estimates the gradient by sampling subsets, \ie ``batches'', of the data set, computing the gradient using the sampled batch only. 
The standard $\ell_2$ projection step, $\textsc{Proj}$, projects the ratings back to the hypercube by clipping any ratings that fall outside the valid range $[\theta_{min}, \theta_{max}]$.

We compute the gradient for a subset of the votes (\ie batch $B \subseteq [\succ]$), $\nabla_{\btheta} \tilde{L}(B, \btheta)$, from the batch loss $\tilde{L}(B, \btheta)$, which resembles~\eqref{eq:sco-loss} but summed only over the votes in $B$ using the continuous $\tilde{D}_v$
from~\eqref{eq:sigmoid-discrepancy-func}:
\begin{equation}
\tilde{L}(B, \btheta) = \sum_{v \in B} \sum_{(i,j) \in I_2(v)} \tilde{D}_v(\theta_{v[i]}, \theta_{v[j]}). \label{eq:sco-loss-batch}
\end{equation}
This allows an online form of the algorithm, similar to Elo, where ratings for players can be updated in a decentralized fashion after receiving the outcome of a single game ($|B| = 1$).

\subsubsection{Sigmoidal Programming}
\label{sec:sco-via-sp}

A different way to compute SCO ratings is via sigmoidal programming~\cite{Udell2014MaximizingAS}: solve for the (soft) optimum directly, \ie find $\btheta^*$ that minimizes
$\tilde{L}$ defined in equation~\ref{eq:sco-loss}.
Note the $\tilde{L}$ can be rewritten in terms of the number of pairwise interactions between agents, quantified in the $\bN$ matrix:
\begin{eqnarray}
\tilde{L}([\succ], A, V, \btheta) & = & \sum_{a, b \in A \times A} N(a,b) \sigma(\theta_b - \theta_a)
\label{eq:sco-loss-sum-with-n}
\end{eqnarray}
which is a sum of {\it sigmoidal functions} $\sigma$ defined as functions which are strictly convex on domain $\theta_b - \theta_a \le z$ and then strictly concave on $\theta_b - \theta_a \ge z$ (\ie $z = 0$).
With a variable per difference in pair of ratings and appropriate constraints on variables and their feasible regions, the resulting optimization problem is known as a {\it sigmoidal program} which can be solved using a branch-and-bound algorithm~\cite{Udell2014MaximizingAS}.
A detailed construction is presented in Appendix~\ref{app:sco-as-sp}\footnote{
All appendices are available in the technical report version of the paper~\cite{lanctot2024softcondorcetoptimizationranking}.}

\begin{algorithm}[t]
\DontPrintSemicolon 
\KwIn{A preference profile $[\succ]$}
\KwIn{An initial parameter vector $\btheta^0 = \left( \frac{\theta_{max} - \theta_{min}}{2} \right) \mathbf{1}$, where $\mathbf{1}$ is a vector of ones of length $|A|$} 
\KwIn{Learning rates for each step $\alpha^t$}
\KwIn{Batch size $K$}
\For{$t \in \{1, 2, \cdots, T\}$}{
  $B \gets$ $K$ votes sampled uniformly from $[\succ]$ \;
  Define $\nabla_{\btheta} \tilde{L}(B, \btheta)$ based on \eqref{eq:sco-loss-batch}. \label{alg:define-grad} \;
  $\btheta^t = \textsc{Proj}(\btheta^{t-1} - \alpha^t \nabla_{\btheta} \tilde{L}(B))$ \label{alg:update-line}
}
\Return{$\btheta^T$} \;
\caption{Learning SCO ratings by gradient descent}
\label{alg:sco}
\end{algorithm}

\subsection{Fenchel-Young Loss Optimization}
\label{sec:sco-fenchel-young}

Here, we give an overview of the implementation of Fenchel-Young loss optimization; for more detail on the precise formulation and relationship perturbed optimizers, please see Appendix~\ref{app:fy-loss-min}.
 
In practice, Fenchel-Young loss optimization follows Algorithm~\ref{alg:sco}, with a different definition of the loss and hence gradient on line~\ref{alg:define-grad}. For a single vote $v$, a stochastic version of the gradient $\hat g_v$ can be computed: let $\btheta_v$ be the ratings for agents compared in $v$.
Let $\bX$ be a vector of Gumbel-distributed random variables of size $|v|$. Then $\tilde{\btheta}_v = \btheta_v + \sigma X$ is the {\it perturbed ratings} and let $\textsc{ArgSort}(-\tilde{\btheta}_v)$ be the the indices of the elements that would sort the values.
The Fenchel-Young gradient is then:
\begin{equation}
\hat g_v = \textsc{ArgSort}(\textsc{ArgSort}(-\tilde{\btheta}_v)) - (0, 1, \cdots, |v|-1), 
\label{eq:fy-grad}
\end{equation}
for all agents in $v$, and 0 otherwise.
Inuitively, if the perturbed ranking obtained by sorting $\tilde{\btheta}_v$ would yield the same order of agents as in $v$, then the gradient for this vote would be zero. Otherwise, it is nonzero and each element of the gradient corresponds to the difference in rank position between the vote and perturbed ranking.
Similarly to standard gradient descent, these gradients can be accumulated over batches $|B| \geq 1$ as $\nabla_{\btheta} \tilde{L}^{FY}(B, \btheta) = \sum_{v \in B} \hat g_v$.

\subsection{Theoretical Properties}
\label{sec:sco-theory}


Given the SCO framework, defined through the loss function introduced in equation~\ref{eq:sco-loss}, the first question to ask is whether its solution does lead to rankings with desired properties. We answer this question in the affirmative.

\begin{theorem}
Given the sum of soft Kendall-tau distances:
\label{th:sigmoid-condorcet}
\begin{eqnarray}
\tilde{L}([\succ], A, V, \btheta) & = & \sum_{a, b \in A \times A} N(a,b) \sigma(\theta_b - \theta_a)\, ,
\end{eqnarray}
if for preference profile $[\succ]$, voters $V$, there exists a candidate $c \in A$ that is a Condorcet winner, the loss is monotonically decreasing with $\theta_c$. As a consequence, if $\btheta^*$ is a global minimum of $\tilde{L}$ on the $\ell_\infty$ ball of radius $\theta_{\max}$, then $\theta^*_c = \theta_{\max}$.
\end{theorem}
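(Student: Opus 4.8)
The plan is to show that the partial derivative of $\tilde{L}$ with respect to the rating $\theta_c$ of the Condorcet winner is strictly negative at every point of the box $[-\theta_{\max}, \theta_{\max}]^m$, which immediately yields both the monotonicity claim and, since the derivative never vanishes, the location of the global minimizer. First I would isolate the terms of the double sum $\sum_{a,b} N(a,b)\,\sigma(\theta_b - \theta_a)$ that actually depend on $\theta_c$: these are exactly the terms with $a = c$ (of the form $N(c,b)\,\sigma(\theta_b - \theta_c)$) and those with $b = c$ (of the form $N(a,c)\,\sigma(\theta_c - \theta_a)$); the diagonal term contributes nothing since $N(c,c) = 0$. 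Differentiating term-by-term, applying the chain rule (which supplies a factor $-1$ to the $a=c$ family and $+1$ to the $b=c$ family), and relabelling the dummy index gives

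\begin{equation}
\frac{\partial \tilde{L}}{\partial \theta_c} = \sum_{b \neq c} \big[\, N(b,c)\,\sigma'(\theta_c - \theta_b) - N(c,b)\,\sigma'(\theta_b - \theta_c) \,\big].
\end{equation}

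The key observation, and the step I expect to carry the argument, is that $\sigma'$ is an even function: since $\sigma(-x) = 1 - \sigma(x)$ one has $\sigma'(-x) = \sigma'(x) > 0$. This lets me pull out the common positive weight $\sigma'(\theta_b - \theta_c)$ from each paired term, collapsing the bracket into the vote margin $\delta(c,b) = N(c,b) - N(b,c)$:

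\begin{equation}
\frac{\partial \tilde{L}}{\partial \theta_c} = -\sum_{b \neq c} \sigma'(\theta_b - \theta_c)\,\delta(c,b).
\end{equation}

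Now the Condorcet hypothesis does the remaining work: by definition a strong Condorcet winner satisfies $\delta(c,b) > 0$ for every $b \neq c$, while $\sigma'(\theta_b - \theta_c) > 0$ always. Hence every summand is strictly positive and the derivative is strictly negative for all $\btheta$, proving that $\tilde{L}$ is strictly decreasing in $\theta_c$ with the other coordinates held fixed. For the stated consequence, I would argue by contradiction: if a global minimizer $\btheta^*$ had $\theta_c^* < \theta_{\max}$, then raising $\theta_c^*$ toward $\theta_{\max}$ keeps the point inside the $\ell_\infty$ ball yet strictly lowers $\tilde{L}$, contradicting global optimality; therefore $\theta_c^* = \theta_{\max}$.

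The only genuinely delicate points are bookkeeping rather than conceptual: I must keep the two index families ($a = c$ versus $b = c$) distinct while differentiating and track the sign of the chain-rule factor correctly, and I must be careful about the distinction between weak and strong Condorcet winners. The even-symmetry reduction shows the derivative is only guaranteed to be $\le 0$ for a weak winner, so the sharp conclusion $\theta_c^* = \theta_{\max}$ genuinely requires the strict margins of a strong Condorcet winner (equivalently, at least one strictly positive pairwise majority among the $\delta(c,b)$), which is what forces the minimizer to the corner of the box.
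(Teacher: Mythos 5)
Your proof is correct and follows essentially the same route as the paper's: both isolate the terms of the loss involving $c$ and exploit the symmetry of the logistic function to reduce the $c$-dependent part to margin coefficients $\delta(c,b)>0$, then conclude by contradiction that the minimizer sits at $\theta_c^*=\theta_{\max}$. The only cosmetic difference is that you apply the symmetry at the level of the derivative (via $\sigma'(-x)=\sigma'(x)$) whereas the paper rewrites the loss itself using $\sigma(x)=1-\sigma(-x)$ before observing monotonicity; your explicit remark that weak Condorcet winners only give a non-strict inequality is a welcome clarification the paper leaves implicit.
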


\begin{proof}[Proof (sketch -- for a full proof, please see Appendix~\ref{app:proofs})]
Let $c \in A$ be the Condorcet winner. 
The loss $\tilde{L}$ as expressed in equation~\ref{eq:sco-loss-sum-with-n} is expressable in terms of a constant $K$ (that does not depend on $\btheta$) and a sum of sigmoids multiplied by coefficients from $\bM$, by symmetry of the logistic function: $\sigma(x) = 1 - \sigma(-x)$. 
The second term is decomposable into contributions from comparisons to agent $c$, which is monotonically decreasing in $\theta_c$, and a sum which does not depend on $\theta_c$. As a result, increasing $\theta_c$ always decreases the loss, hence the minimum must correspond to $\theta^*_c = \theta_{max}$.
\end{proof}

Theorem~\ref{th:sigmoid-condorcet}  assumes that it is possible to find $\btheta^*$. The challenge is $\tilde{L}$ is nonconvex in its parameters, $\btheta$, and thus standard gradient descent is not guaranteed to find a global minimum.  However, the sigmoidal programming approach described in Section~\ref{sec:sco-via-sp} is guaranteed to find a point that approximately minimizes $\tilde{L}$ within a specified tolerance region, though the problem may take exponential time in the number ($\Omega(m^2)$) of variables. Furthermore, as we will show in Section~\ref{sec:eval}, stochastic gradient descent, while without any guarantees, tends to perform very well in practice. 

The ranking loss used by Fenchel-Young optimization method described in Section~\ref{sec:sco-fenchel-young} is convex and Lipschitz with respect to its parameters. Since the parameters correspond to the ratings themselves, the loss is also convex with respect to the parameters. Hence, assuming we restrict the ratings to a compact, convex set, there exists a global minimum that gradient descent via Fenchel-Young gradients is guaranteed to approach assuming standard learning rate conditions (e.g., square-summable, not summable).

\section{Empirical Evaluation}
\label{sec:eval}

We run experiments to demonstrate a number of properties of interest. In particular, we are interested in understanding how closely the rankings obtained from SCO approximate those obtained via Kemeny-Young, compare to outcomes returned by perturbed optimizers and Elo using several sources of data:

\begin{description}
\item[Example data.] This is a preference profile used in Section~\ref{sec:eval-warmup}, example similar to the one from Table~\ref{tab:example}. 
\item[PrefLib data.] These are examples from the voting literature on Wikipedia and on real data from elections, sports analytics, and others from PrefLib~\cite{Preflib_MaWa13a}.
Note that we restrict ourselves to the strictly-ordered incomplete (SOI) and strictly-ordered complete (SOC) data types in PrefLib, yielding a total of 12,680 preference profiles.
\item[Synthetic evaluation data.] The synthetic data is generated to resemble those coming from online matching data, such as from a gaming site or tournament, based on TrueSkill~\cite{TrueSkill06}.
Agents' true skill values are normally distributed and contests (matches) between them are generated such that each agent's individual performance is stochastic with mean centered at their skill level. The outcome of each match-up is then a sorted list of each player's performance in the match-up, equal to their true skill plus normally-distributed noise.
Generated data allows us to mimic the structured sparsity present in real online game-play data but also to compare results to actual ground truth rankings. 
\item[Diplomacy game data.] This data set is our largest challenge problem: an anonymized agent-vs-agent data set of 7-player Diplomacy games played on the webDiplomacy web site (\url{webdiplomacy.net}) between 2008 and 2019~\cite{lanctot2023evaluating} with $m = 52,958$ agents (players) and $n = 31,049$ votes (games). 
Each game outcome is a strict order between seven players; the goal is to find a ranking over agents that minimizes the average Kendall-tau distance to all the votes.
With these values, only $0.0011 \%$ of the margin matrix entries are nonzero.
\end{description}
We chose these data sets to show specific properties of SCO ratings: top-ranking Condorcet winners, PrefLib ranking data naturally capturing real human preferences, the online game regime evaluation systems are commonly deployed but with ground truth ratings, and finally a very large challenge human evaluation problem.

For Elo ratings, the majorization-minorization algorithm of Hunter (used by bayeselo~\cite{Coulom05bayeselo}) is used to efficiently compute the best fit to the evaluation data~\cite{Hunter04btmodels}, and the SigmoidalProgramming package~\cite{Udell20SP} to solve sigmoidal programs.
By default we use gradient descent to minimize the sigmoid loss (Algorithm~\ref{alg:sco}) to compute the SCO ratings, but also compare Fenchel-Young gradients and sigmoidal programming. 
Unless otherwise noted, $\theta_{min} = 0$ and $\theta_{max} = 100$.
Full details such as specific hyperparameter values, please see Appendix~\ref{app:results}.
We also show two additional experiments in Appendix~\ref{app:results}: one which
shows SCO used to approximate a Bayesian posterior over rating and one that shows SCO's online performance.

\subsection{Warmup: Top-Ranking Condorcet Winners}
\label{sec:eval-warmup}


Lanctot et al. showed that Elo assigns the same rating to agent $A$ and $C$ in the example in Table~\ref{tab:example}, despite $A$ not being a Condorcet winner~\cite{lanctot2023evaluating}.
In contrast, SCO is designed to find the optimal ranking according to Definition~\ref{def:optimal-ranking}, which top-ranks Condorcet winners when they exist. 
Consider the following 5-vote preference profile:
\begin{equation}
\label{eq:example}
2: A \succ B \succ C, \hspace{1cm} 3: C \succ A \succ B.
\end{equation}
Note that $\delta(C, A) = \delta(C, B) = 3 - 2 > 0$, hence agent $C$ is a strong Condorcet winner.
However, the win rate of $A$ ($\frac{7}{15}$) is higher than $C$ ($\frac{6}{15}$), hence Elo assigns strictly higher rating to agent A.
Since there are only six possible rankings, it is easy to verify that $\tilde{L}$ is minimized for the optimal ranking $C \succ A \succ B$. 
Since $n = 5$, we use full gradient descent (no batching) and compare to stochastic gradient descent with a batch size of 2.
In all cases, Algorithm~\ref{alg:sco} using the sigmoid loss converges to the optimal ranking, and so does sigmoidal programming. 

We also find that Fenchel-Young gradient descent top-ranks agent $A$. This is because the gradient of a rating using the Fenchel-Young loss is weighted by the difference in ranks rather than just order misclassifications like the soft Kendall-tau distance. We elaborate on this in Section~\ref{sec:discussion}.
Full results are shown in Appendix~\ref{app:full-warmup-results}. 

\subsection{Kemeny-Young Approximation Quality}
\label{sec:eval-kemeny-approx-qual}

\begin{table}[t]
    \centering
    \begin{tabular}{rr|rrr|rr|rr}
    $m_{\bot}$  & $m_{\top}$ & size & $\bar{m}$ & $\bar{n}$  & $C^{gd}$ & $C^{sp}$ & $\bar{K}_n^{gd}$ & $\bar{K}_n^{sp}$   \\
    \hline
         2     &    2      &   11     &    2.00   &  29    & 1.00 & 1.00 & 0 & 0  \\
         3     &    3      &   115    &    3.00   &  1878  & 1.00 & 1.00 & 0 & 0  \\
         4     &    4      &   162    &    4.00   &  7189  & 1.00 & 0.99 & 0.005 & 0.009 \\
         5     &    5      &   135    &    5.00   &  34666 & 1.00 & 0.66 & 0.024 & 0.039 \\
         6     &    6      &   109    &    6.00   &  33266 & 0.99 &      & 0.043 &      \\
         7     &    7      &   92     &    7.00   &  28755 & 0.97 &      & 0.029 &      \\
         8     &    8      &   73     &    8.00   &  18336 & 0.96 &      & 0.032 &    \\
         9     &    9      &   88     &    9.00   &   4190 & 0.94 &      & 0.027 &   \\
         10    &    10     &   80     &   10.00   &   3289 & 0.97 &      & 0.023 &   \\
         11    &    20     &   1721   &   16.48   &   127  & 0.99 & & &   \\
         21    &    50     &   1465   &   30.75   &    34  & 0.98 & & &   \\
         51    &    100    &   567    &   71.50   &    40  & 0.92 & & &   \\
         101   &    200    &   2989   &  124.00   &    25  & 0.99 & & &   \\
         201   &    500    &   4540   &  302.81   &    65  & 0.98 & & &   \\
         501   &    --     &   533    & 2190.04   &    50  & 0.56 & & &   \\
    \hline
    \end{tabular}
    \caption{Kemeny-Young approximation quality on 12,680 PrefLib instances, grouped by number of alternatives where $m_{\bot} \le |A| \le m_{\top}$. Each row corresponds to a group, size to the number of instances in each group, $\bar{m}$ and $\bar{n}$ are the average number of alternatives and votes in each group. $C^{gd}$ and $C^{sp}$ refer to the Condorcet match proportions for gradient descent and sigmoidal programming, respectively.
    Similarly, $\bar{K}_n$ refers to normalized Kendall-tau distance to the Kemeny ranking, averaged over all instances in the group.}
    \label{tab:preflib}
\end{table}

We evaluate approximation quality (compared to the Kemeny-Young ranking) of Algorithm~\ref{alg:sco} on PrefLib instances~\cite{Preflib_MaWa13a}.
We run Algorithm~\ref{alg:sco} with a batch size $|B| = 32$, learning rates $\alpha \in \{ 0.01, 0.1 \}$, iterations $T \in \{ 10^4, 10^5 \}$, and temperature $\tau \in \{1, \frac{1}{2}\}$ averaged across 3 seeds per instance on all 12,680 PrefLib data instances. On the instances where $|A| \le 10$ we also run the Kemeny-Young method. Denote an instance by $i \in \{1, 2, \cdots, 12680 \}$. 
Each produces a ranking which we denote $R_{i,\text{SCO}}$ and $R_{i,\text{Kem}}$. 

We compute two metrics:
(i) {\it Condorcet Match Proportion}: this is the proportion of instances that top-ranks a Condorcet winner when one exists. (ii) {\it Normalized Kendall-tau Distance}: For all instances $i$ with $|A| \le 10$, the average value of $K_n(R_{i,\text{SCO}}, R_{i,\text{Kem}})$.

We show the aggregated metrics for 15 groupings of alternatives ($m = |A|$) that partition the 12,680 preference profiles in Table~\ref{tab:preflib}.
Generally, when using gradient descent (Algorithm~\ref{alg:sco}, Section~\ref{sec:sco-via-sgd}) the Condorcet winner is top-rated when it exists 92\% - 100\% of the time when $|A| \le 500$ and the average normalized Kendall-tau distance to the Kemeny solution is low ($\bar{K}_n(R_{i,\text{SCO}}, R_{i,\text{Kem}}) \le 0.043)$.
We found that sigmoidal programming worked as well as Algorithm~\ref{alg:sco} for instances where $m \le 4$. On instances with five or more alternatives, there were numerical instabilities with the sigmoidal programming approach leading to a high number of failures. Hence, we run only gradient descent when $m \ge 6$.

\subsection{Sparse Data Regime}
\label{sec:eval-sparse-data}

In this subsection, we generate synthetic data by simulating evaluations from match-ups played in an online game setting. 
We do this in two ways: one that is uniform (reflecting a round-robin style tournament), and one that leads to a structured form of sparsity often encountered in competitive gaming (\ie skill-matching platforms).
We use $|A| = 20$ agents where for each agent $i$: $\theta_i \sim N(100, 30)$, and contests between 4 agents (\ie four-player games).
To generate contests, two separate distributions are used: the \defword{uniform distribution} samples agents uniformly at random, 
and the \defword{skill-matched distribution} which incrementally 
builds each contest, drawing 3 new candidates at random and choosing the one whose true rating is closest to the average of the set of agents so far.
Then, for each contest $c$, we simulate the performance of agent $i$ in that contest, $P_i(c) = \theta_i + \epsilon_{c,i}$, where each $\epsilon_{c,i} \sim N(0, 5.0)$. The outcome of the contest (vote among contestants) is then obtained by sorting the performances of all  the agents in that contest.

We run Algorithm~\ref{alg:sco}, Elo, and several VasE methods across many simulated $n$-contest tournaments, where each value of $n$ corresponds to a proportion of missing data $p_\emptyset$ (alternative pairs that have not been evaluated in a contest together), with $p^u_{\emptyset}$ and $p^s_{\emptyset}$ referring the the uniform and skill-match distributions respectively:

\vspace{0.2cm}

\begin{tabular}{|l|cccccccc|}
\hline
$n$                 & 5    & 10   & 20   & 30   & 50   & 75   & 100  & 200   \\
\hline
$p^u_{\emptyset}$   & 0.85 & 0.72 & 0.52 & 0.38 & 0.20 & 0.09 & 0.04 & 0.001 \\
\hline
$p^s_{\emptyset}$   & 0.88 & 0.75 & 0.59 & 0.49 & 0.36 & 0.28 & 0.23 & 0.15 \\ 
\hline
\end{tabular}

\vspace{0.2cm}

\noindent For each value of $n$ we run 200 instances using different seeds and report average values.
For each run, we used 10000 iterations with batch size 16.
We show two different metrics. First, the Kendall-tau distance between the final ranking found by Algorithm~\ref{alg:sco} and the true ranking given the true ratings (maximum value of $\frac{20 \cdot 19}{2} = 190$).
This first metric identifies the pairs of agents whose relative order disagree between SCO and the true ranking; we denote these discordant pairs $D_{SCO,true}$.
The second metric, which we call ``mean true ratings distance'' (MTRD), is then defined to be the average absolute difference in true ratings between all pairs of agents in these disagreements $\sum_{(i,j) \in D_{\text{SCO},\text{true}}} | \theta_i - \theta_j | / |D_{\text{SCO},\text{true}}|$, which allows us to take a nuanced look at the optimized parameters in addition to the associated ranking. The results are shown in Fig.~\ref{fig:noisy_sparse_tournament}.

\begin{figure}[t]
    \centering
    \includegraphics[width=\columnwidth]{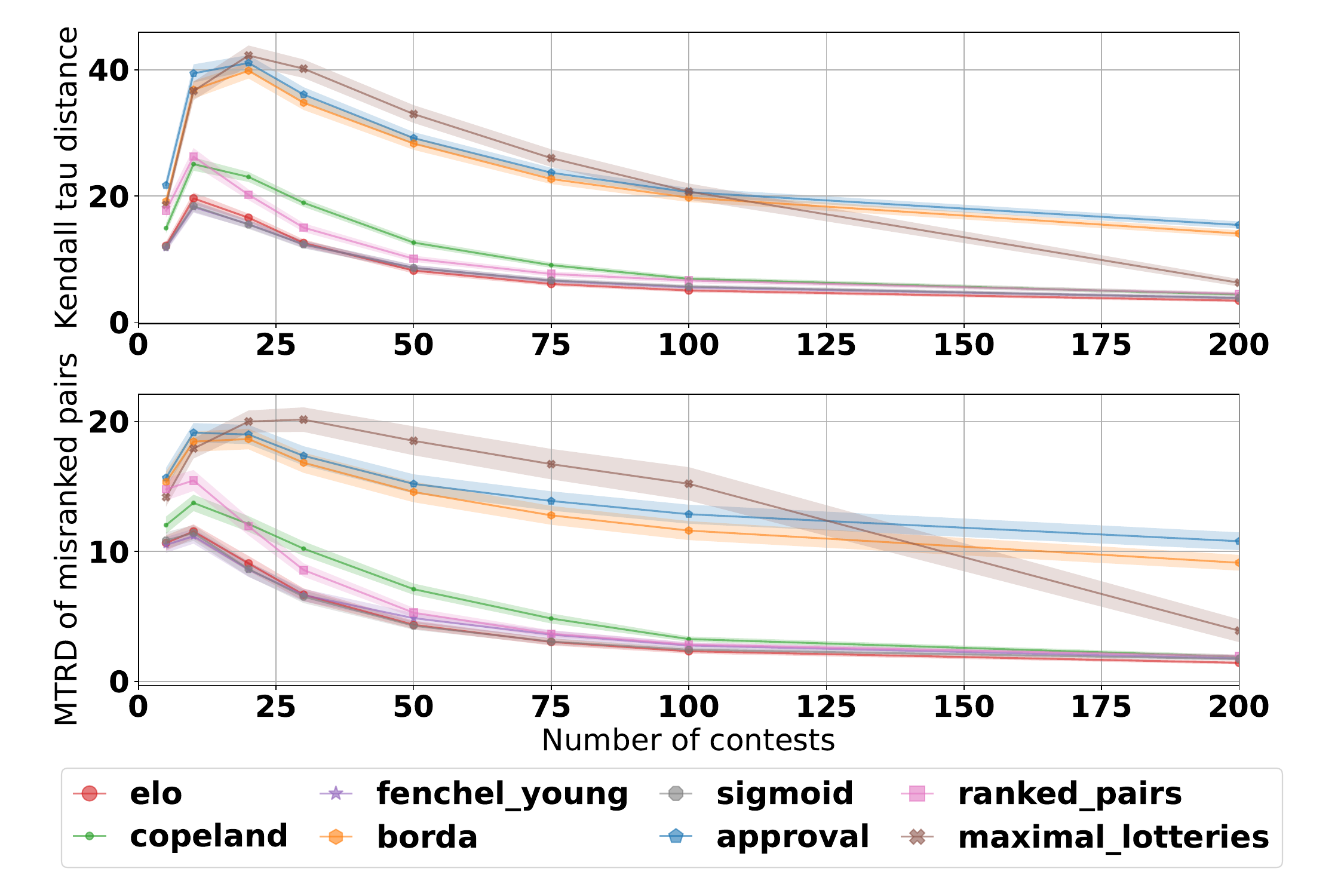}
    \includegraphics[width=\columnwidth]{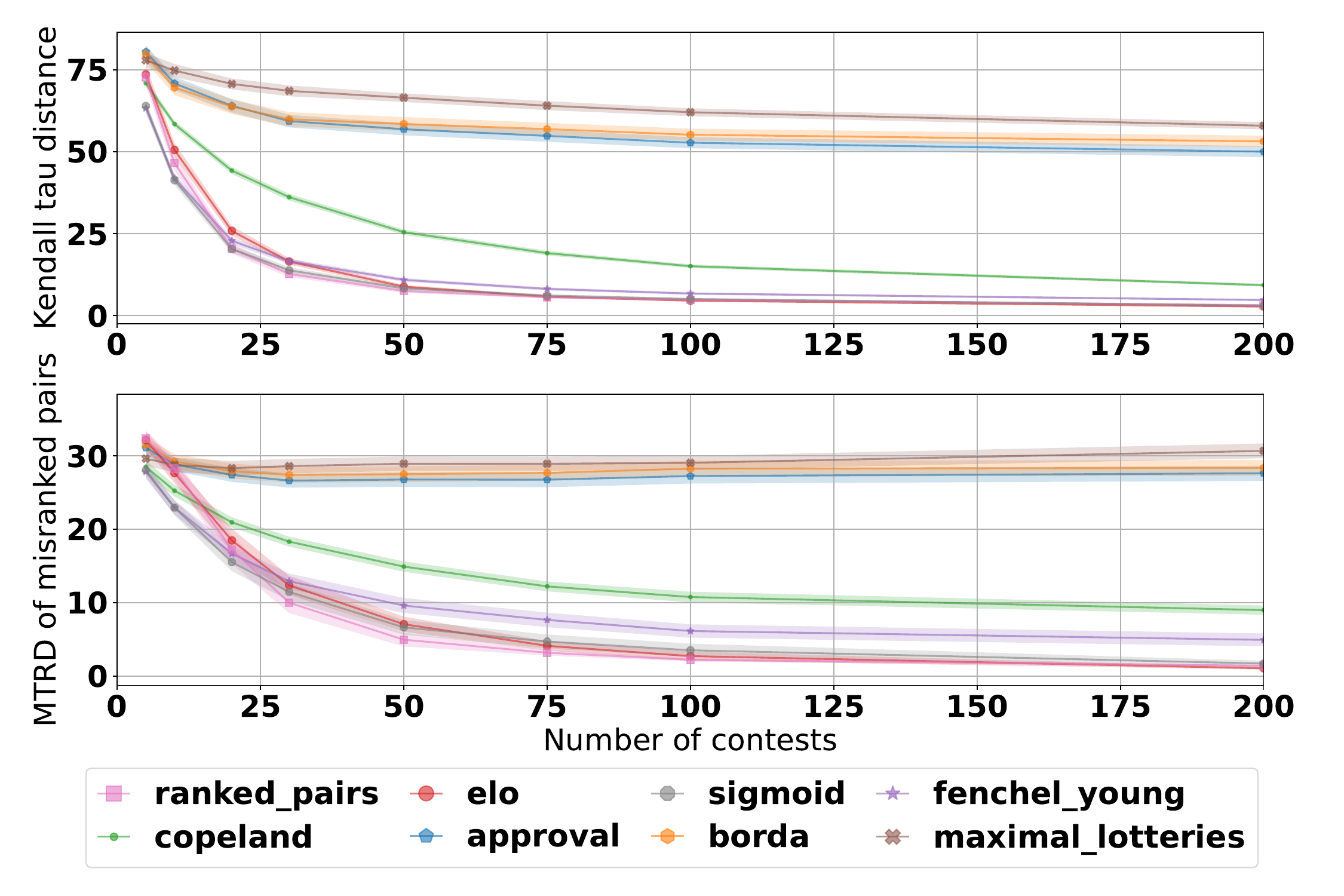}
    \caption{Kendall-tau distance (KTD) of ranking to true ranking, and Mean True Rating Distance (MTRD) of misranked pairs for ranking methods in tournament settings. 
    Top: uniform distribution, Bottom: skill-matched distribution.
    Error bars represent 95\% confidence intervals.}
    \label{fig:noisy_sparse_tournament}
\end{figure}

Of the VasE voting methods: approval, Borda, and maximal lotteries have the highest KTD and MRTD, and struggle especially in the skill-matched distribution. This is unsurprising; for example, approval and Borda will naturally weight alternatives according to their representation in the data.
Under both distributions, when 59\% or more of the match-ups are missing, SCO ratings (computed using both sigmoid and Fenchel-Young losses) achieve the lowest KTD and MRTD.
Under the uniform distribution, SCO achieves the lowest when 38\% or more of the data is missing.
Under the skill-matched distribution, ranked pairs achieves comparable KTD to SCO ratings, and lower MRTD when the amount of missing data is less than 50\%.
Elo values are comparable to SCO in the uniform distribution and  higher in the skill-matched distribution.

\subsection{Diplomacy Game Outcome Prediction}
\label{sec:eval-diplomacy}


In this subsection we investigate the question of how well SCO rankings predict human game outcomes. Recall that this data set consists of all human-played seven-player games taken from the {\tt webDiplomacy.net} spanning 11 years, resulting in a data set of size $m = 52,958$ and $n = 31,049$. As a result, less than $0.01\%$ of the unique player combinations are observed in the data.
Finding a single ranking that sufficiently predicts the unseen test data is a difficult due to the extreme sparsity of the preference data.

We reflect the evaluation used in~\cite{lanctot2023evaluating} with a small modification to adopt common practice in the supervised learning setting .
First, we create 50 random splits of the data into 
training sets $\cD_R$ and testing sets $\cD_T$, with $|\cD_R| = 28049$ game outcomes (votes) and $|\cD_T| = 3000$ game outcomes.
The random splits are such that each alternative in the test set is seen at least once on the training set, but no game outcomes (data points) are shared across train / test split.
At each iteration $t$ the method has a ranking denoted $R_t$ learned from data in $\cD_R$; we then compute and report the average Kendall-tau distance over the test set: $\textsc{KTD}_{test}(t) = \frac{1}{3000} \sum_{g \in \cD_T} K_d(g, R_t)$. 

The resulting $\textsc{KTD}_{test}(t)$ is shown in Figure~\ref{fig:sco-diplomacy}. We found that batch size had a minor effect on the results, so we show results for batch size 32 only.
SCO with sigmoid loss reaches an average Kendall-tau distance of $8.10$ after roughly 190000 iterations, and Fenchel-Young loss reaches $8.05$ at 600000 iterations.
There is an effect of increasing error after reaching this low point, likely due to overfitting; this could be reduced with early stopping, annealing learning rates, or other forms of regularization.
As a point of comparison, Elo and the best VasE method on this data set, VasE(Copeland), achieve a value of $8.34$. The next best VasE method was plurality, achieving a value of 8.57.
The more complex Condorcet methods, such as ranked pairs and maximal lotteries, cannot be run on this dataset due to their complexity, since $m = 52,958$.

\begin{figure}[t]
    \includegraphics[width=\columnwidth]{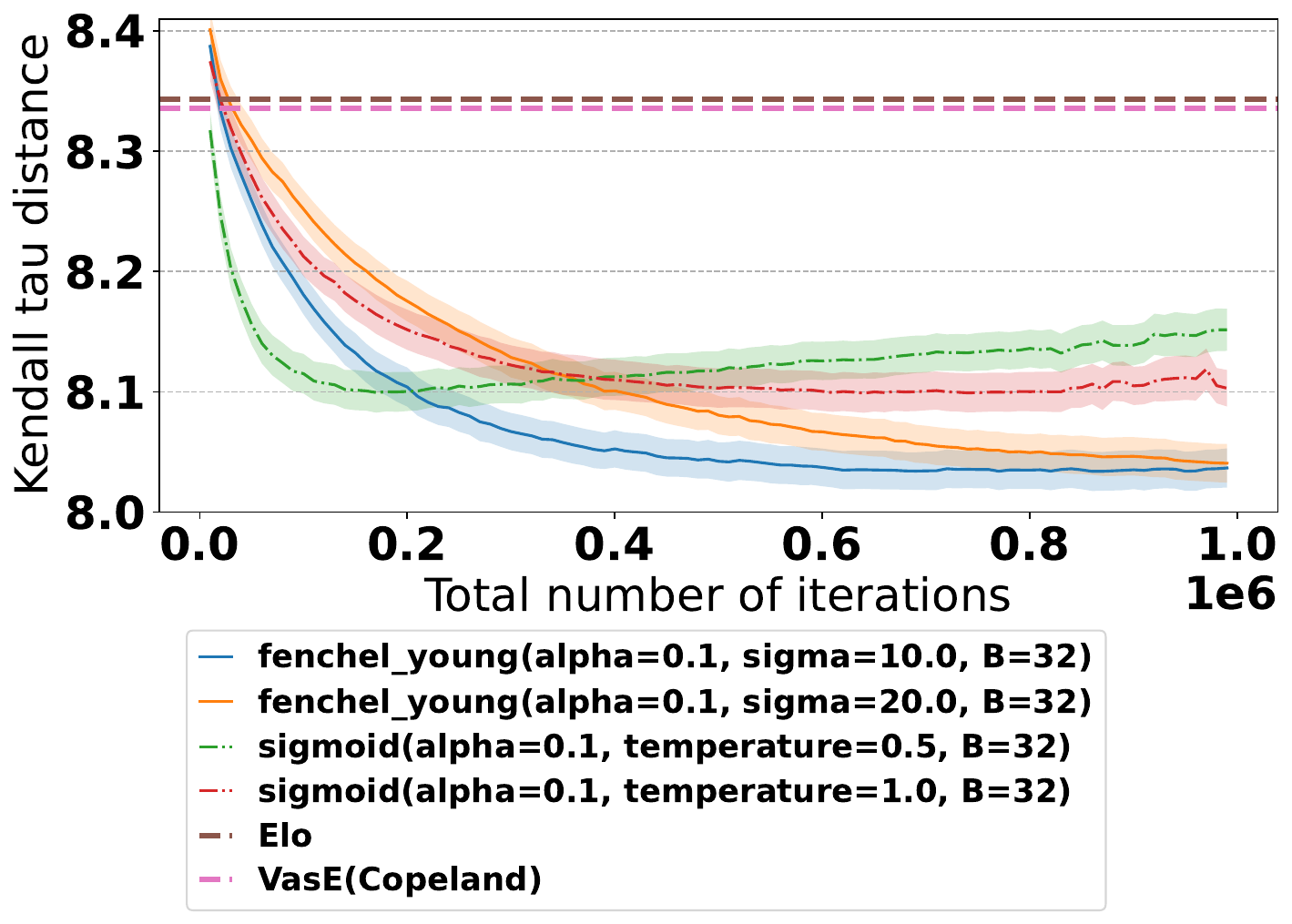}
    \caption{Average $\textsc{KTD}_{test}(t)$ between rankings and Diplomacy game outcomes across 50 seeds and train/test splits. All runs use batch size 32. Error bars depict 95\% conf. intervals.}
    \label{fig:sco-diplomacy}
\end{figure}

\section{Discussion}
\label{sec:discussion}


Which algorithm should be used to compute SCO ratings?
In our experience, sigmoidal programming produced similar results to standard gradient descent using the sigmoid loss; however, it sometimes suffered from numerical instability and was hard to scale to large number of agents due to programs requiring $m^2$ variables. Hence, we recommend using the sigmoidal programming approach only when there are a relatively low number of alternatives.
The Fenchel-Young loss is convex and hence gradient descent is guaranteed to converge to a global minimum; 
however, Condorcet winners (when they exist) are not necessarily top-ranked at that global minimum.
The practical performance of Fenchel-Young loss minimization is comparable to sigmoid loss minimization and slightly better in the large Diplomacy problem. 
Both the sigmoid and Fenchel-Young losses are minimized by gradient descent so can be optimized online (batch size $|B| = 1$) and work particularly well in approximating the optimal rankings when a large portion of the evaluation data is missing.

The differences between sigmoid and Fenchel-Young loss minimization is further illustrated in Figure~\ref{fig:loss-landscapes}, where we plot two landscapes of the Fenchel-Young and sigmoid losses for the same data over three agents, using the example of vote profiles in Equation~\ref{eq:example}. Since both losses are invariant by adding a constant, we plot over a 2D slice, for all values of $\theta$ with the same sum. As discussed above, the Fenchel-Young loss is strictly convex, its global minimizer exists is found by gradient descent. Like Elo, it favours the win rate and assigns the highest rating not to the Condorcet winner $C$, but to $A$. In contrast, the sigmoid loss is nonconvex, and has no global minimizer (it keeps decreasing at infinity). Optimizing over constrained rating, it assigns the highest rating to the Condorcet winner $C$.

If Condorcet-consistency or distance to the optimal ranking is important, we recommend using the sigmoid loss as it optimizes to minimize the distance to it directly; despite being non-convex, in practice it finds the Condorcet winner when it exists $\ge 96\%$ of the time when the number of alternatives $m \le 500$ and returns almost-optimal rankings on instances where it can be compared (Section~\ref{sec:eval-kemeny-approx-qual}).
Whereas, if assured convergence or weighting the loss functions by win rates (like Elo) is more important, we recommend using the Fenchel-Young loss optimization.

\begin{figure}[t]
    \includegraphics[width=\columnwidth]{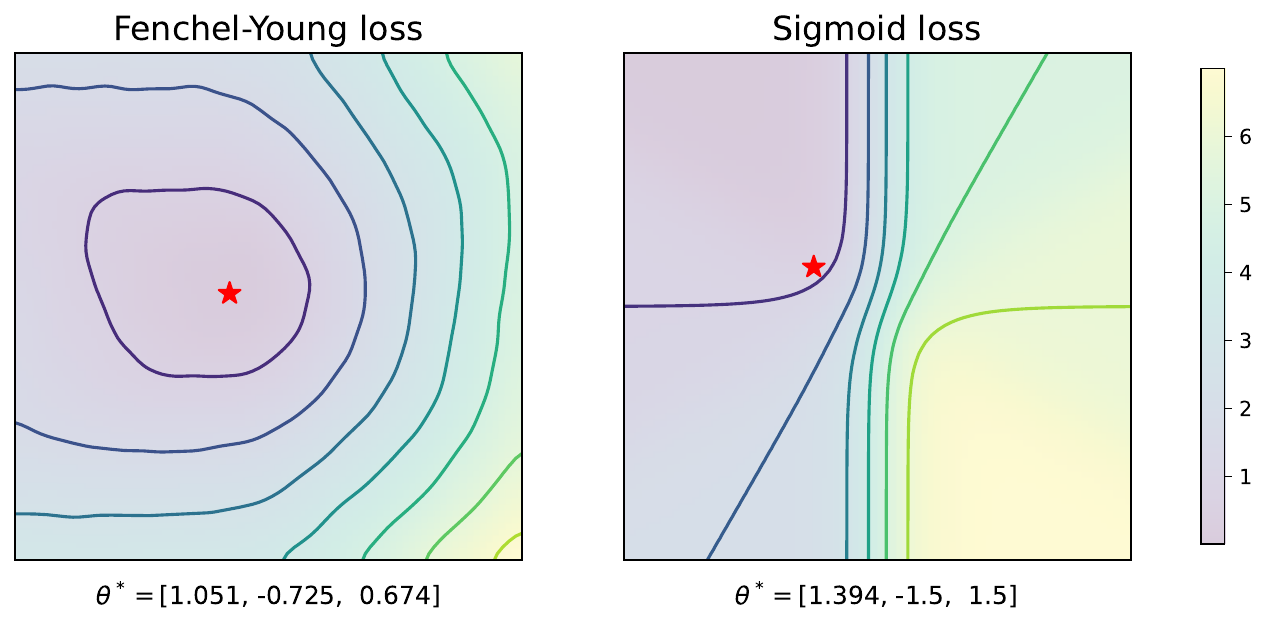}
    \caption{Loss landscapes for the Fenchel-Young (left) and sigmoid (right), with minimizer (red star) over $[-1.5, 1.5]^3$.}
    \label{fig:loss-landscapes}
\end{figure}

For future work, we would like to compare the performance of SCO to faster methods for finding Kemeny rankings (or approximations thereof)~\cite{KenyonMathieu07,Karpinski10,Ali12,Rico23Reducing,Rico23Kemeny} or other ranking methods inspired by tournament solutions~\cite{rajkumar15ranking}, or alternatives~\cite{Conati23}.
Considering similar differential approximations to other ranking distance functions such as Spearman's footrule distance~\cite{Diaconis77} as well as other ways to aggregate rankings~\cite{Dwork01} could be worthwhile.
Finally, we would like to investigate using SCO and social choice theory for driving post-training for alignment of language models~\cite{Conitzer24,swamy2024minimaximalist,maurarivero2025jackpotalignmentmaximallottery}.

\begin{acks}
We would like to that thank Luke Marris and anonymous reviewers of the Social Choice and Learning Algorithms Workshop at AAMAS 2024 for their valuable feedback on earlier versions of this paper.
\end{acks}


\clearpage
\bibliographystyle{ACM-Reference-Format} 
\bibliography{paper}


\begin{thebibliography}{84}


\ifx \showCODEN    \undefined \def \showCODEN     #1{\unskip}     \fi
\ifx \showDOI      \undefined \def \showDOI       #1{#1}\fi
\ifx \showISBNx    \undefined \def \showISBNx     #1{\unskip}     \fi
\ifx \showISBNxiii \undefined \def \showISBNxiii  #1{\unskip}     \fi
\ifx \showISSN     \undefined \def \showISSN      #1{\unskip}     \fi
\ifx \showLCCN     \undefined \def \showLCCN      #1{\unskip}     \fi
\ifx \shownote     \undefined \def \shownote      #1{#1}          \fi
\ifx \showarticletitle \undefined \def \showarticletitle #1{#1}   \fi
\ifx \showURL      \undefined \def \showURL       {\relax}        \fi
\providecommand\bibfield[2]{#2}
\providecommand\bibinfo[2]{#2}
\providecommand\natexlab[1]{#1}
\providecommand\showeprint[2][]{arXiv:#2}

\bibitem[\protect\citeauthoryear{Ali and Meil\v{a}}{Ali and Meil\v{a}}{2012}]%
        {Ali12}
\bibfield{author}{\bibinfo{person}{Alnur Ali} {and} \bibinfo{person}{Marina
  Meil\v{a}}.} \bibinfo{year}{2012}\natexlab{}.
\newblock \showarticletitle{Experiments with Kemeny Ranking: What Works When?}
\newblock \bibinfo{journal}{\emph{Mathematical Social Sciences}}
  \bibinfo{volume}{64}, \bibinfo{number}{1} (\bibinfo{year}{2012}),
  \bibinfo{pages}{28--40}.
\newblock


\bibitem[\protect\citeauthoryear{Alvo and Yu}{Alvo and Yu}{2014}]%
        {alvo2014rankprob}
\bibfield{author}{\bibinfo{person}{Mayer Alvo} {and} \bibinfo{person}{Philip
  L~H Yu}.} \bibinfo{year}{2014}\natexlab{}.
\newblock \bibinfo{booktitle}{\emph{Statistical Methods for Ranking Data}}.
\newblock \bibinfo{publisher}{Springer New York}.
\newblock


\bibitem[\protect\citeauthoryear{Aziz, Brill, Fischer, Harrenstein, Lang, and
  Seedig}{Aziz et~al\mbox{.}}{2015}]%
        {aziz2015possible}
\bibfield{author}{\bibinfo{person}{Haris Aziz}, \bibinfo{person}{Markus Brill},
  \bibinfo{person}{Felix Fischer}, \bibinfo{person}{Paul Harrenstein},
  \bibinfo{person}{J{\'e}r{\^o}me Lang}, {and} \bibinfo{person}{Hans~Georg
  Seedig}.} \bibinfo{year}{2015}\natexlab{}.
\newblock \showarticletitle{Possible and necessary winners of partial
  tournaments}.
\newblock \bibinfo{journal}{\emph{Journal of Artificial Intelligence Research}}
   \bibinfo{volume}{54} (\bibinfo{year}{2015}), \bibinfo{pages}{493--534}.
\newblock


\bibitem[\protect\citeauthoryear{Baid, Cook, Shafin, Yun, Llinares-L{\'o}pez,
  Berthet, Belyaeva, T{\"o}pfer, Wenger, Rowell, et~al\mbox{.}}{Baid
  et~al\mbox{.}}{2023}]%
        {baid2023deepconsensus}
\bibfield{author}{\bibinfo{person}{Gunjan Baid}, \bibinfo{person}{Daniel~E
  Cook}, \bibinfo{person}{Kishwar Shafin}, \bibinfo{person}{Taedong Yun},
  \bibinfo{person}{Felipe Llinares-L{\'o}pez}, \bibinfo{person}{Quentin
  Berthet}, \bibinfo{person}{Anastasiya Belyaeva}, \bibinfo{person}{Armin
  T{\"o}pfer}, \bibinfo{person}{Aaron~M Wenger}, \bibinfo{person}{William~J
  Rowell}, {et~al\mbox{.}}} \bibinfo{year}{2023}\natexlab{}.
\newblock \showarticletitle{DeepConsensus improves the accuracy of sequences
  with a gap-aware sequence transformer}.
\newblock \bibinfo{journal}{\emph{Nature Biotechnology}} \bibinfo{volume}{41},
  \bibinfo{number}{2} (\bibinfo{year}{2023}), \bibinfo{pages}{232--238}.
\newblock


\bibitem[\protect\citeauthoryear{Balduzzi, Tuyls, Perolat, and
  Graepel}{Balduzzi et~al\mbox{.}}{2018}]%
        {Balduzzi18ReEval}
\bibfield{author}{\bibinfo{person}{David Balduzzi}, \bibinfo{person}{Karl
  Tuyls}, \bibinfo{person}{Julien Perolat}, {and} \bibinfo{person}{Thore
  Graepel}.} \bibinfo{year}{2018}\natexlab{}.
\newblock \showarticletitle{Re-evaluating evaluation}. In
  \bibinfo{booktitle}{\emph{Advances in Neural Information Processing
  Systems}}, \bibfield{editor}{\bibinfo{person}{S.~Bengio},
  \bibinfo{person}{H.~Wallach}, \bibinfo{person}{H.~Larochelle},
  \bibinfo{person}{K.~Grauman}, \bibinfo{person}{N.~Cesa-Bianchi}, {and}
  \bibinfo{person}{R.~Garnett}} (Eds.), Vol.~\bibinfo{volume}{31}.
  \bibinfo{publisher}{Curran Associates, Inc.}
\newblock


\bibitem[\protect\citeauthoryear{{Bellemare}, {Naddaf}, {Veness}, and
  {Bowling}}{{Bellemare} et~al\mbox{.}}{2013}]%
        {bellemare13arcade}
\bibfield{author}{\bibinfo{person}{M.~G. {Bellemare}}, \bibinfo{person}{Y.
  {Naddaf}}, \bibinfo{person}{J. {Veness}}, {and} \bibinfo{person}{M.
  {Bowling}}.} \bibinfo{year}{2013}\natexlab{}.
\newblock \showarticletitle{The Arcade Learning Environment: An Evaluation
  Platform for General Agents}.
\newblock \bibinfo{journal}{\emph{Journal of Artificial Intelligence Research}}
   \bibinfo{volume}{47} (\bibinfo{date}{jun} \bibinfo{year}{2013}),
  \bibinfo{pages}{253--279}.
\newblock


\bibitem[\protect\citeauthoryear{bench authors}{bench authors}{2023}]%
        {srivastava2023beyond}
\bibfield{author}{\bibinfo{person}{BIG bench authors}.}
  \bibinfo{year}{2023}\natexlab{}.
\newblock \showarticletitle{Beyond the Imitation Game: Quantifying and
  extrapolating the capabilities of language models}.
\newblock \bibinfo{journal}{\emph{Transactions on Machine Learning Research}}
  (\bibinfo{year}{2023}).
\newblock
\showISSN{2835-8856}
\urldef\tempurl%
\url{https://openreview.net/forum?id=uyTL5Bvosj}
\showURL{%
\tempurl}


\bibitem[\protect\citeauthoryear{Berthet, Blondel, Teboul, Cuturi, Vert, and
  Bach}{Berthet et~al\mbox{.}}{2020}]%
        {Berthet20}
\bibfield{author}{\bibinfo{person}{Quentin Berthet}, \bibinfo{person}{Mathieu
  Blondel}, \bibinfo{person}{Olivier Teboul}, \bibinfo{person}{Marco Cuturi},
  \bibinfo{person}{Jean-Philippe Vert}, {and} \bibinfo{person}{Francis Bach}.}
  \bibinfo{year}{2020}\natexlab{}.
\newblock \showarticletitle{Learning with Differentiable Pertubed Optimizers}.
  In \bibinfo{booktitle}{\emph{Advances in Neural Information Processing
  Systems}}, \bibfield{editor}{\bibinfo{person}{H.~Larochelle},
  \bibinfo{person}{M.~Ranzato}, \bibinfo{person}{R.~Hadsell},
  \bibinfo{person}{M.F. Balcan}, {and} \bibinfo{person}{H.~Lin}} (Eds.),
  Vol.~\bibinfo{volume}{33}. \bibinfo{publisher}{Curran Associates, Inc.},
  \bibinfo{pages}{9508--9519}.
\newblock
\urldef\tempurl%
\url{https://proceedings.neurips.cc/paper_files/paper/2020/file/6bb56208f672af0dd65451f869fedfd9-Paper.pdf}
\showURL{%
\tempurl}


\bibitem[\protect\citeauthoryear{Bertrand, Czarnecki, and Gidel}{Bertrand
  et~al\mbox{.}}{2023}]%
        {bertrand2023elolim}
\bibfield{author}{\bibinfo{person}{Quentin Bertrand},
  \bibinfo{person}{Wojciech~Marian Czarnecki}, {and} \bibinfo{person}{Gauthier
  Gidel}.} \bibinfo{year}{2023}\natexlab{}.
\newblock \showarticletitle{On the Limitations of the Elo, {Real-World} Games
  are Transitive, not Additive}. In \bibinfo{booktitle}{\emph{Proceedings of
  The 26th International Conference on Artificial Intelligence and Statistics}}
  \emph{(\bibinfo{series}{Proceedings of Machine Learning Research},
  Vol.~\bibinfo{volume}{206})}, \bibfield{editor}{\bibinfo{person}{Francisco
  Ruiz}, \bibinfo{person}{Jennifer Dy}, {and} \bibinfo{person}{Jan-Willem
  van~de Meent}} (Eds.). \bibinfo{publisher}{PMLR},
  \bibinfo{pages}{2905--2921}.
\newblock


\bibitem[\protect\citeauthoryear{Blondel, Martins, and Niculae}{Blondel
  et~al\mbox{.}}{2020a}]%
        {blondel2020learning}
\bibfield{author}{\bibinfo{person}{Mathieu Blondel},
  \bibinfo{person}{Andr{\'e}~FT Martins}, {and} \bibinfo{person}{Vlad
  Niculae}.} \bibinfo{year}{2020}\natexlab{a}.
\newblock \showarticletitle{Learning with {F}enchel-{Y}oung losses}.
\newblock \bibinfo{journal}{\emph{The Journal of Machine Learning Research}}
  \bibinfo{volume}{21}, \bibinfo{number}{1} (\bibinfo{year}{2020}),
  \bibinfo{pages}{1314--1382}.
\newblock


\bibitem[\protect\citeauthoryear{Blondel, Teboul, Berthet, and
  Djolonga}{Blondel et~al\mbox{.}}{2020b}]%
        {Blondel20}
\bibfield{author}{\bibinfo{person}{Mathieu Blondel}, \bibinfo{person}{Olivier
  Teboul}, \bibinfo{person}{Quentin Berthet}, {and} \bibinfo{person}{Josip
  Djolonga}.} \bibinfo{year}{2020}\natexlab{b}.
\newblock \showarticletitle{Fast Differentiable Sorting and Ranking}. In
  \bibinfo{booktitle}{\emph{Proceedings of the 37th International Conference on
  Machine Learning}} \emph{(\bibinfo{series}{Proceedings of Machine Learning
  Research}, Vol.~\bibinfo{volume}{119})},
  \bibfield{editor}{\bibinfo{person}{Hal~Daumé III} {and}
  \bibinfo{person}{Aarti Singh}} (Eds.). \bibinfo{publisher}{PMLR},
  \bibinfo{pages}{950--959}.
\newblock
\urldef\tempurl%
\url{https://proceedings.mlr.press/v119/blondel20a.html}
\showURL{%
\tempurl}


\bibitem[\protect\citeauthoryear{Boubdir, Kim, Ermis, Hooker, and
  Fadaee}{Boubdir et~al\mbox{.}}{2023}]%
        {Boubdir2023-nz}
\bibfield{author}{\bibinfo{person}{Meriem Boubdir}, \bibinfo{person}{Edward
  Kim}, \bibinfo{person}{Beyza Ermis}, \bibinfo{person}{Sara Hooker}, {and}
  \bibinfo{person}{Marzieh Fadaee}.} \bibinfo{year}{2023}\natexlab{}.
\newblock \showarticletitle{Elo Uncovered: Robustness and Best Practices in
  Language Model Evaluation}.
\newblock  (\bibinfo{date}{Nov.} \bibinfo{year}{2023}).
\newblock
\showeprint[arxiv]{2311.17295}~[cs.CL]


\bibitem[\protect\citeauthoryear{Bradley and Terry}{Bradley and Terry}{1952}]%
        {BradleyTerry52}
\bibfield{author}{\bibinfo{person}{R.~A. Bradley} {and} \bibinfo{person}{M.~E.
  Terry}.} \bibinfo{year}{1952}\natexlab{}.
\newblock \showarticletitle{Rank analysis of incomplete block designs {I}:
  {T}he method of paired comparisons}.
\newblock \bibinfo{journal}{\emph{Biometrika}}  \bibinfo{volume}{39}
  (\bibinfo{year}{1952}), \bibinfo{pages}{324--345}.
\newblock


\bibitem[\protect\citeauthoryear{Brandt, Conitzer, Endriss, Lang, and
  Procaccia}{Brandt et~al\mbox{.}}{2016}]%
        {Brandt16Handbook}
\bibfield{editor}{\bibinfo{person}{Felix Brandt}, \bibinfo{person}{Vincent
  Conitzer}, \bibinfo{person}{Ulle Endriss}, \bibinfo{person}{J\'{e}r\^{o}me
  Lang}, {and} \bibinfo{person}{Ariel~D. Procaccia}} (Eds.).
  \bibinfo{year}{2016}\natexlab{}.
\newblock \bibinfo{booktitle}{\emph{Handbook of Computational Social Choice}}.
\newblock \bibinfo{publisher}{Cambridge University Press}.
\newblock


\bibitem[\protect\citeauthoryear{Campbell, Hoane, and Hsu}{Campbell
  et~al\mbox{.}}{2002}]%
        {DeepBlue}
\bibfield{author}{\bibinfo{person}{Murray Campbell}, \bibinfo{person}{Jr.
  Hoane, A.~Joseph}, {and} \bibinfo{person}{Feng-hsiung Hsu}.}
  \bibinfo{year}{2002}\natexlab{}.
\newblock \showarticletitle{Deep Blue}.
\newblock \bibinfo{journal}{\emph{Artificial Intelligence}}
  \bibinfo{volume}{134}, \bibinfo{number}{1--2} (\bibinfo{year}{2002}),
  \bibinfo{pages}{57--83}.
\newblock


\bibitem[\protect\citeauthoryear{Carr, Berthet, Blondel, Teboul, and
  Zeghidour}{Carr et~al\mbox{.}}{2021}]%
        {carr2021self}
\bibfield{author}{\bibinfo{person}{Andrew~N Carr}, \bibinfo{person}{Quentin
  Berthet}, \bibinfo{person}{Mathieu Blondel}, \bibinfo{person}{Olivier
  Teboul}, {and} \bibinfo{person}{Neil Zeghidour}.}
  \bibinfo{year}{2021}\natexlab{}.
\newblock \showarticletitle{Self-supervised learning of audio representations
  from permutations with differentiable ranking}.
\newblock \bibinfo{journal}{\emph{IEEE Signal Processing Letters}}
  \bibinfo{volume}{28} (\bibinfo{year}{2021}), \bibinfo{pages}{708--712}.
\newblock


\bibitem[\protect\citeauthoryear{Chiang, Zheng, Sheng, Angelopoulos, Li, Li,
  Zhu, Zhang, Michael~Jordan, and Stoica}{Chiang et~al\mbox{.}}{2024}]%
        {chiang24chatbot}
\bibfield{author}{\bibinfo{person}{Wei-Lin Chiang}, \bibinfo{person}{Lianmin
  Zheng}, \bibinfo{person}{Ying Sheng}, \bibinfo{person}{Anastasios~Nikolas
  Angelopoulos}, \bibinfo{person}{Tianle Li}, \bibinfo{person}{Dacheng Li},
  \bibinfo{person}{Banghua Zhu}, \bibinfo{person}{Hao Zhang},
  \bibinfo{person}{Joseph E.~Gonzalez Michael~Jordan}, {and}
  \bibinfo{person}{Ion Stoica}.} \bibinfo{year}{2024}\natexlab{}.
\newblock \showarticletitle{Chatbot Arena: An Open Platform for Evaluating LLMs
  by Human Preference}. In \bibinfo{booktitle}{\emph{Proceedings of the 41st
  International Conference on Machine Learning}} \emph{(\bibinfo{series}{PLMR},
  Vol.~\bibinfo{volume}{235})}.
\newblock


\bibitem[\protect\citeauthoryear{Conati, Niskanen, and J\"{a}rvisalo}{Conati
  et~al\mbox{.}}{2022}]%
        {Conati23}
\bibfield{author}{\bibinfo{person}{Ari Conati}, \bibinfo{person}{Andreas
  Niskanen}, {and} \bibinfo{person}{Matti J\"{a}rvisalo}.}
  \bibinfo{year}{2022}\natexlab{}.
\newblock \showarticletitle{SAT-based Judgment Aggregation}. In
  \bibinfo{booktitle}{\emph{International Conference on Autonomous Agents and
  Multiagent Systems (AAMAS)}}.
\newblock


\bibitem[\protect\citeauthoryear{Conitzer, Freedman, Heitzig, Holliday, Jacobs,
  Lambert, Mosse, Pacuit, Russell, Schoelkopf, Tewolde, and Zwicker}{Conitzer
  et~al\mbox{.}}{2024}]%
        {Conitzer24}
\bibfield{author}{\bibinfo{person}{Vincent Conitzer}, \bibinfo{person}{Rachel
  Freedman}, \bibinfo{person}{Jobst Heitzig}, \bibinfo{person}{Wesley~H.
  Holliday}, \bibinfo{person}{Bob~M. Jacobs}, \bibinfo{person}{Nathan Lambert},
  \bibinfo{person}{Milan Mosse}, \bibinfo{person}{Eric Pacuit},
  \bibinfo{person}{Stuart Russell}, \bibinfo{person}{Hailey Schoelkopf},
  \bibinfo{person}{Emanuel Tewolde}, {and} \bibinfo{person}{William~S.
  Zwicker}.} \bibinfo{year}{2024}\natexlab{}.
\newblock \showarticletitle{Position: Social Choice Should Guide {AI} Alignment
  in Dealing with Diverse Human Feedback}. In
  \bibinfo{booktitle}{\emph{Proceedings of the 41st International Conference on
  Machine Learning}} \emph{(\bibinfo{series}{Proceedings of Machine Learning
  Research}, Vol.~\bibinfo{volume}{235})},
  \bibfield{editor}{\bibinfo{person}{Ruslan Salakhutdinov},
  \bibinfo{person}{Zico Kolter}, \bibinfo{person}{Katherine Heller},
  \bibinfo{person}{Adrian Weller}, \bibinfo{person}{Nuria Oliver},
  \bibinfo{person}{Jonathan Scarlett}, {and} \bibinfo{person}{Felix
  Berkenkamp}} (Eds.). \bibinfo{publisher}{PMLR}, \bibinfo{pages}{9346--9360}.
\newblock
\urldef\tempurl%
\url{https://proceedings.mlr.press/v235/conitzer24a.html}
\showURL{%
\tempurl}


\bibitem[\protect\citeauthoryear{Conitzer and Sandholm}{Conitzer and
  Sandholm}{2005}]%
        {ConitzerSandholm05}
\bibfield{author}{\bibinfo{person}{V. Conitzer} {and} \bibinfo{person}{T.
  Sandholm}.} \bibinfo{year}{2005}\natexlab{}.
\newblock \showarticletitle{Common Voting Rules as Maximum Likelihood
  Estimators}. In \bibinfo{booktitle}{\emph{Proceedings of the Conference on
  Uncertainty in Artificial Intelligence (UAI)}}. \bibinfo{pages}{145--152}.
\newblock


\bibitem[\protect\citeauthoryear{Cordonnier, Mahendran, Dosovitskiy,
  Weissenborn, Uszkoreit, and Unterthiner}{Cordonnier et~al\mbox{.}}{2021}]%
        {cordonnier2021differentiable}
\bibfield{author}{\bibinfo{person}{Jean-Baptiste Cordonnier},
  \bibinfo{person}{Aravindh Mahendran}, \bibinfo{person}{Alexey Dosovitskiy},
  \bibinfo{person}{Dirk Weissenborn}, \bibinfo{person}{Jakob Uszkoreit}, {and}
  \bibinfo{person}{Thomas Unterthiner}.} \bibinfo{year}{2021}\natexlab{}.
\newblock \showarticletitle{Differentiable patch selection for image
  recognition}. In \bibinfo{booktitle}{\emph{Proceedings of the Conference on
  Computer Vision and Pattern Recognition}}. \bibinfo{pages}{2351--2360}.
\newblock


\bibitem[\protect\citeauthoryear{Coulom}{Coulom}{2005}]%
        {Coulom05bayeselo}
\bibfield{author}{\bibinfo{person}{R\'{e}mi Coulom}.}
  \bibinfo{year}{2005}\natexlab{}.
\newblock \bibinfo{title}{Bayesian {E}lo Rating}.
\newblock
\newblock
\newblock
\shownote{\url{https://www.remi-coulom.fr/Bayesian-Elo}.}


\bibitem[\protect\citeauthoryear{Cramer}{Cramer}{2002}]%
        {Cramer02}
\bibfield{author}{\bibinfo{person}{J.S. Cramer}.}
  \bibinfo{year}{2002}\natexlab{}.
\newblock \bibinfo{booktitle}{\emph{The origins of logistic regression}}.
\newblock \bibinfo{type}{{T}echnical {R}eport}. \bibinfo{pages}{167--178}
  pages.
\newblock


\bibitem[\protect\citeauthoryear{Critchlow, Fligner, and Verducci}{Critchlow
  et~al\mbox{.}}{1991}]%
        {critchlow1991probabilitymodels}
\bibfield{author}{\bibinfo{person}{Douglas~E Critchlow},
  \bibinfo{person}{Michael~A Fligner}, {and} \bibinfo{person}{Joseph~S
  Verducci}.} \bibinfo{year}{1991}\natexlab{}.
\newblock \showarticletitle{Probability models on rankings}.
\newblock \bibinfo{journal}{\emph{Journal of mathematical psychology}}
  \bibinfo{volume}{35}, \bibinfo{number}{3} (\bibinfo{date}{Sept.}
  \bibinfo{year}{1991}), \bibinfo{pages}{294--318}.
\newblock


\bibitem[\protect\citeauthoryear{Cuturi}{Cuturi}{2013}]%
        {cuturi2013sinkhorn}
\bibfield{author}{\bibinfo{person}{Marco Cuturi}.}
  \bibinfo{year}{2013}\natexlab{}.
\newblock \showarticletitle{Sinkhorn distances: Lightspeed computation of
  optimal transport}.
\newblock \bibinfo{journal}{\emph{Advances in Neural Information Processing
  Systems}}  \bibinfo{volume}{26} (\bibinfo{year}{2013}).
\newblock


\bibitem[\protect\citeauthoryear{Cuturi and Blondel}{Cuturi and
  Blondel}{2017}]%
        {cuturi2017soft}
\bibfield{author}{\bibinfo{person}{Marco Cuturi} {and} \bibinfo{person}{Mathieu
  Blondel}.} \bibinfo{year}{2017}\natexlab{}.
\newblock \showarticletitle{Soft-{DTW}: a differentiable loss function for
  time-series}. In \bibinfo{booktitle}{\emph{International Conference on
  Machine Learning}}. \bibinfo{pages}{894--903}.
\newblock


\bibitem[\protect\citeauthoryear{de~Condorcet}{de~Condorcet}{1785}]%
        {deCondorcet1785}
\bibfield{author}{\bibinfo{person}{Marquis de Condorcet}.}
  \bibinfo{year}{1785}\natexlab{}.
\newblock \bibinfo{title}{Essai sur l'application de l'analyse \`{a} la
  probabilit\'e des d\'ecisions rendues \`a la pluralit\'e des voix}.
\newblock
\newblock


\bibitem[\protect\citeauthoryear{Diaconis}{Diaconis}{1988}]%
        {diaconis1988rankprob}
\bibfield{author}{\bibinfo{person}{Persi Diaconis}.}
  \bibinfo{year}{1988}\natexlab{}.
\newblock \bibinfo{booktitle}{\emph{Group representations in probability and
  statistics}}. \bibinfo{series}{Lecture Notes-Monograph Series},
  Vol.~\bibinfo{volume}{11}.
\newblock \bibinfo{publisher}{Institute of Mathematical Statistics}. i--192
  pages.
\newblock


\bibitem[\protect\citeauthoryear{Diaconis and Graham}{Diaconis and
  Graham}{1977}]%
        {Diaconis77}
\bibfield{author}{\bibinfo{person}{P. Diaconis} {and} \bibinfo{person}{R.
  Graham}.} \bibinfo{year}{1977}\natexlab{}.
\newblock \showarticletitle{Spearman’s footrule as a measure of disarray}.
\newblock \bibinfo{journal}{\emph{Royal Statistical Society. Series B
  (Methodological)}} \bibinfo{volume}{39}, \bibinfo{number}{2}
  (\bibinfo{year}{1977}), \bibinfo{pages}{262--268}.
\newblock


\bibitem[\protect\citeauthoryear{Dubois-Taine, Bach, Berthet, and
  Taylor}{Dubois-Taine et~al\mbox{.}}{2022}]%
        {dubois2022fast}
\bibfield{author}{\bibinfo{person}{Benjamin Dubois-Taine},
  \bibinfo{person}{Francis Bach}, \bibinfo{person}{Quentin Berthet}, {and}
  \bibinfo{person}{Adrien Taylor}.} \bibinfo{year}{2022}\natexlab{}.
\newblock \showarticletitle{Fast Stochastic Composite Minimization and an
  Accelerated Frank-Wolfe Algorithm under Parallelization}. In
  \bibinfo{booktitle}{\emph{Advances in Neural Information Processing
  Systems}}.
\newblock


\bibitem[\protect\citeauthoryear{Dwork, Kumar, Naor, and Sivakumar}{Dwork
  et~al\mbox{.}}{2001}]%
        {Dwork01}
\bibfield{author}{\bibinfo{person}{Cynthia Dwork}, \bibinfo{person}{Ravi
  Kumar}, \bibinfo{person}{Moni Naor}, {and} \bibinfo{person}{D. Sivakumar}.}
  \bibinfo{year}{2001}\natexlab{}.
\newblock \showarticletitle{Rank aggregation methods for the Web}. In
  \bibinfo{booktitle}{\emph{Proceedings of the 10th international conference on
  World Wide Web}}. \bibinfo{pages}{613--622}.
\newblock


\bibitem[\protect\citeauthoryear{Elo}{Elo}{1978}]%
        {Elo78}
\bibfield{author}{\bibinfo{person}{Arpad~E. Elo}.}
  \bibinfo{year}{1978}\natexlab{}.
\newblock \bibinfo{booktitle}{\emph{The Ratings of Chess Players, Past and
  Present} (\bibinfo{edition}{2nd} ed.)}.
\newblock \bibinfo{publisher}{Arco Publishing, Inc.}
\newblock


\bibitem[\protect\citeauthoryear{(FAIR)†, Bakhtin, Brown, Dinan, Farina,
  Flaherty, Fried, Goff, Gray, Hu, Jacob, Komeili, Konath, Kwon, Lerer, Lewis,
  Miller, Mitts, Renduchintala, Roller, Rowe, Shi, Spisak, Wei, Wu, Zhang, and
  Zijlstra}{(FAIR)† et~al\mbox{.}}{2022}]%
        {Cicero}
\bibfield{author}{\bibinfo{person}{Meta Fundamental AI Research Diplomacy~Team
  (FAIR)†}, \bibinfo{person}{Anton Bakhtin}, \bibinfo{person}{Noam Brown},
  \bibinfo{person}{Emily Dinan}, \bibinfo{person}{Gabriele Farina},
  \bibinfo{person}{Colin Flaherty}, \bibinfo{person}{Daniel Fried},
  \bibinfo{person}{Andrew Goff}, \bibinfo{person}{Jonathan Gray},
  \bibinfo{person}{Hengyuan Hu}, \bibinfo{person}{Athul~Paul Jacob},
  \bibinfo{person}{Mojtaba Komeili}, \bibinfo{person}{Karthik Konath},
  \bibinfo{person}{Minae Kwon}, \bibinfo{person}{Adam Lerer},
  \bibinfo{person}{Mike Lewis}, \bibinfo{person}{Alexander~H. Miller},
  \bibinfo{person}{Sasha Mitts}, \bibinfo{person}{Adithya Renduchintala},
  \bibinfo{person}{Stephen Roller}, \bibinfo{person}{Dirk Rowe},
  \bibinfo{person}{Weiyan Shi}, \bibinfo{person}{Joe Spisak},
  \bibinfo{person}{Alexander Wei}, \bibinfo{person}{David Wu},
  \bibinfo{person}{Hugh Zhang}, {and} \bibinfo{person}{Markus Zijlstra}.}
  \bibinfo{year}{2022}\natexlab{}.
\newblock \showarticletitle{Human-level play in the game of {D}iplomacy by
  combining language models with strategic reasoning}.
\newblock \bibinfo{journal}{\emph{Science}} \bibinfo{volume}{378},
  \bibinfo{number}{6624} (\bibinfo{year}{2022}), \bibinfo{pages}{1067--1074}.
\newblock
\urldef\tempurl%
\url{https://doi.org/10.1126/science.ade9097}
\showDOI{\tempurl}
\showeprint{https://www.science.org/doi/pdf/10.1126/science.ade9097}


\bibitem[\protect\citeauthoryear{Fligner and Verducci}{Fligner and
  Verducci}{1986}]%
        {fligner1986distancebased}
\bibfield{author}{\bibinfo{person}{M~A Fligner} {and} \bibinfo{person}{J~S
  Verducci}.} \bibinfo{year}{1986}\natexlab{}.
\newblock \showarticletitle{Distance Based Ranking Models}.
\newblock \bibinfo{journal}{\emph{Journal of the Royal Statistical Society.
  Series B, Statistical methodology}} \bibinfo{volume}{48}, \bibinfo{number}{3}
  (\bibinfo{year}{1986}), \bibinfo{pages}{359--369}.
\newblock


\bibitem[\protect\citeauthoryear{Gumbel}{Gumbel}{1954}]%
        {gumbel1954statistical}
\bibfield{author}{\bibinfo{person}{Emil~Julius Gumbel}.}
  \bibinfo{year}{1954}\natexlab{}.
\newblock \bibinfo{booktitle}{\emph{Statistical Theory of Extreme Values and
  some Practical Applications: A Series of Lectures}}.
\newblock Number~33. \bibinfo{publisher}{US Govt. Print. Office}.
\newblock


\bibitem[\protect\citeauthoryear{Hazan}{Hazan}{2015}]%
        {HazanOCObook}
\bibfield{author}{\bibinfo{person}{Elad Hazan}.}
  \bibinfo{year}{2015}\natexlab{}.
\newblock \bibinfo{booktitle}{\emph{Introduction to online convex
  optimization}}.
\newblock \bibinfo{publisher}{Now Publishers Inc.}
\newblock
\newblock
\shownote{Foundations and Trends in Optimization, Volume 2, Issue 3-4.}


\bibitem[\protect\citeauthoryear{Hazan, Papandreou, and Tarlow}{Hazan
  et~al\mbox{.}}{2016}]%
        {hazan2016perturbations}
\bibfield{author}{\bibinfo{person}{Tamir Hazan}, \bibinfo{person}{George
  Papandreou}, {and} \bibinfo{person}{Daniel Tarlow}.}
  \bibinfo{year}{2016}\natexlab{}.
\newblock \bibinfo{booktitle}{\emph{Perturbations, Optimization, and
  Statistics}}.
\newblock \bibinfo{publisher}{MIT Press}.
\newblock


\bibitem[\protect\citeauthoryear{Herbrich, Minka, and Graepel}{Herbrich
  et~al\mbox{.}}{2006}]%
        {TrueSkill06}
\bibfield{author}{\bibinfo{person}{Ralf Herbrich}, \bibinfo{person}{Tom Minka},
  {and} \bibinfo{person}{Thore Graepel}.} \bibinfo{year}{2006}\natexlab{}.
\newblock \showarticletitle{TrueSkill\texttrademark : A {B}ayesian Skill Rating
  System}. In \bibinfo{booktitle}{\emph{Advances in Neural Information
  Processing Systems}}, \bibfield{editor}{\bibinfo{person}{B.~Sch\"{o}lkopf},
  \bibinfo{person}{J.~Platt}, {and} \bibinfo{person}{T.~Hoffman}} (Eds.),
  Vol.~\bibinfo{volume}{19}. \bibinfo{publisher}{MIT Press}.
\newblock


\bibitem[\protect\citeauthoryear{Hunter}{Hunter}{2004}]%
        {Hunter04btmodels}
\bibfield{author}{\bibinfo{person}{David~R. Hunter}.}
  \bibinfo{year}{2004}\natexlab{}.
\newblock \showarticletitle{{MM} Algorithms for Generalized {B}radley-{T}erry
  Models}.
\newblock \bibinfo{journal}{\emph{The Annals of Statistics}}
  \bibinfo{volume}{32}, \bibinfo{number}{1} (\bibinfo{year}{2004}),
  \bibinfo{pages}{384--406}.
\newblock


\bibitem[\protect\citeauthoryear{Karpinski and Schudy}{Karpinski and
  Schudy}{2010}]%
        {Karpinski10}
\bibfield{author}{\bibinfo{person}{M. Karpinski} {and} \bibinfo{person}{W.
  Schudy}.} \bibinfo{year}{2010}\natexlab{}.
\newblock \showarticletitle{Faster Algorithms for Feedback Arc Set Tournament,
  Kemeny Rank Aggregation and Betweenness Tournament}.
\newblock \bibinfo{journal}{\emph{ISAAC}}  \bibinfo{volume}{6506}
  (\bibinfo{year}{2010}), \bibinfo{pages}{3--14}.
\newblock


\bibitem[\protect\citeauthoryear{Kelly, Longjohn, and Nottingham}{Kelly
  et~al\mbox{.}}{[n.d.]}]%
        {UCI}
\bibfield{author}{\bibinfo{person}{Markelle Kelly}, \bibinfo{person}{Rachel
  Longjohn}, {and} \bibinfo{person}{Kolby Nottingham}.}
  \bibinfo{year}{[n.d.]}\natexlab{}.
\newblock \bibinfo{title}{The UCI Machine Learning Repository}.
\newblock
\newblock
\newblock
\shownote{\url{https://archive.ics.uci.edu}.}


\bibitem[\protect\citeauthoryear{Kemeny}{Kemeny}{1959}]%
        {Kemeny59}
\bibfield{author}{\bibinfo{person}{J. Kemeny}.}
  \bibinfo{year}{1959}\natexlab{}.
\newblock \showarticletitle{Mathematics Without Numbers}.
\newblock \bibinfo{journal}{\emph{Daedalus}} \bibinfo{volume}{88},
  \bibinfo{number}{4} (\bibinfo{year}{1959}), \bibinfo{pages}{577--591}.
\newblock


\bibitem[\protect\citeauthoryear{Kenyon-Mathieu and Schudy}{Kenyon-Mathieu and
  Schudy}{2007}]%
        {KenyonMathieu07}
\bibfield{author}{\bibinfo{person}{Claire Kenyon-Mathieu} {and}
  \bibinfo{person}{Warren Schudy}.} \bibinfo{year}{2007}\natexlab{}.
\newblock \showarticletitle{How to rank with few errors}. In
  \bibinfo{booktitle}{\emph{Proceedings of the Thirty-Ninth Annual ACM
  Symposium on Theory of Computing}} (San Diego, California, USA)
  \emph{(\bibinfo{series}{STOC '07})}. \bibinfo{publisher}{Association for
  Computing Machinery}, \bibinfo{address}{New York, NY, USA},
  \bibinfo{pages}{95–103}.
\newblock
\showISBNx{9781595936318}
\urldef\tempurl%
\url{https://doi.org/10.1145/1250790.1250806}
\showDOI{\tempurl}


\bibitem[\protect\citeauthoryear{Krizhevsky, Sutskever, and Hinton}{Krizhevsky
  et~al\mbox{.}}{2012}]%
        {imagenet}
\bibfield{author}{\bibinfo{person}{Alex Krizhevsky}, \bibinfo{person}{Ilya
  Sutskever}, {and} \bibinfo{person}{Geoffrey~E. Hinton}.}
  \bibinfo{year}{2012}\natexlab{}.
\newblock \showarticletitle{ImageNet classification with deep convolutional
  neural networks}. In \bibinfo{booktitle}{\emph{Proceedings of the 25th
  International Conference on Neural Information Processing Systems}}.
  \bibinfo{pages}{1097--1105}.
\newblock


\bibitem[\protect\citeauthoryear{Kumar, Brazil, and Liu}{Kumar
  et~al\mbox{.}}{2021}]%
        {kumar2021groomed}
\bibfield{author}{\bibinfo{person}{Abhinav Kumar}, \bibinfo{person}{Garrick
  Brazil}, {and} \bibinfo{person}{Xiaoming Liu}.}
  \bibinfo{year}{2021}\natexlab{}.
\newblock \showarticletitle{Groomed-nms: Grouped mathematically differentiable
  nms for monocular 3d object detection}. In
  \bibinfo{booktitle}{\emph{Proceedings of the Conference on Computer Vision
  and Pattern Recognition}}. \bibinfo{pages}{8973--8983}.
\newblock


\bibitem[\protect\citeauthoryear{Lanctot, Larson, Bachrach, Marris, Li,
  Bhoopchand, Anthony, Tanner, and Koop}{Lanctot et~al\mbox{.}}{2023}]%
        {lanctot2023evaluating}
\bibfield{author}{\bibinfo{person}{Marc Lanctot}, \bibinfo{person}{Kate
  Larson}, \bibinfo{person}{Yoram Bachrach}, \bibinfo{person}{Luke Marris},
  \bibinfo{person}{Zun Li}, \bibinfo{person}{Avishkar Bhoopchand},
  \bibinfo{person}{Thomas Anthony}, \bibinfo{person}{Brian Tanner}, {and}
  \bibinfo{person}{Anna Koop}.} \bibinfo{year}{2023}\natexlab{}.
\newblock \bibinfo{title}{Evaluating Agents using Social Choice Theory}.
\newblock
\newblock
\showeprint[arxiv]{2312.03121}~[cs.AI]


\bibitem[\protect\citeauthoryear{Lanctot, Larson, Kaisers, Berthet, Gemp, Diaz,
  Maura-Rivero, Bachrach, Koop, and Precup}{Lanctot et~al\mbox{.}}{2024}]%
        {lanctot2024softcondorcetoptimizationranking}
\bibfield{author}{\bibinfo{person}{Marc Lanctot}, \bibinfo{person}{Kate
  Larson}, \bibinfo{person}{Michael Kaisers}, \bibinfo{person}{Quentin
  Berthet}, \bibinfo{person}{Ian Gemp}, \bibinfo{person}{Manfred Diaz},
  \bibinfo{person}{Roberto-Rafael Maura-Rivero}, \bibinfo{person}{Yoram
  Bachrach}, \bibinfo{person}{Anna Koop}, {and} \bibinfo{person}{Doina
  Precup}.} \bibinfo{year}{2024}\natexlab{}.
\newblock \bibinfo{title}{Soft Condorcet Optimization for Ranking of General
  Agents}.
\newblock
\newblock
\showeprint[arxiv]{2411.00119}~[cs.MA]
\urldef\tempurl%
\url{https://arxiv.org/abs/2411.00119}
\showURL{%
\tempurl}


\bibitem[\protect\citeauthoryear{Le~Lidec, Laptev, Schmid, and
  Carpentier}{Le~Lidec et~al\mbox{.}}{2021}]%
        {le2021differentiable}
\bibfield{author}{\bibinfo{person}{Quentin Le~Lidec}, \bibinfo{person}{Ivan
  Laptev}, \bibinfo{person}{Cordelia Schmid}, {and} \bibinfo{person}{Justin
  Carpentier}.} \bibinfo{year}{2021}\natexlab{}.
\newblock \showarticletitle{Differentiable rendering with perturbed
  optimizers}.
\newblock \bibinfo{journal}{\emph{Advances in Neural Information Processing
  Systems}}  \bibinfo{volume}{34} (\bibinfo{year}{2021}),
  \bibinfo{pages}{20398--20409}.
\newblock


\bibitem[\protect\citeauthoryear{Liang, Bommasani, Lee, Tsipras, Soylu,
  Yasunaga, Zhang, Narayanan, Wu, Kumar, Newman, Yuan, Yan, Zhang, Cosgrove,
  Manning, Ré, Acosta-Navas, Hudson, Zelikman, Durmus, Ladhak, Rong, Ren, Yao,
  Wang, Santhanam, Orr, Zheng, Yuksekgonul, Suzgun, Kim, Guha, Chatterji,
  Khattab, Henderson, Huang, Chi, Xie, Santurkar, Ganguli, Hashimoto, Icard,
  Zhang, Chaudhary, Wang, Li, Mai, Zhang, and Koreeda}{Liang
  et~al\mbox{.}}{2022}]%
        {liang2022holistic}
\bibfield{author}{\bibinfo{person}{Percy Liang}, \bibinfo{person}{Rishi
  Bommasani}, \bibinfo{person}{Tony Lee}, \bibinfo{person}{Dimitris Tsipras},
  \bibinfo{person}{Dilara Soylu}, \bibinfo{person}{Michihiro Yasunaga},
  \bibinfo{person}{Yian Zhang}, \bibinfo{person}{Deepak Narayanan},
  \bibinfo{person}{Yuhuai Wu}, \bibinfo{person}{Ananya Kumar},
  \bibinfo{person}{Benjamin Newman}, \bibinfo{person}{Binhang Yuan},
  \bibinfo{person}{Bobby Yan}, \bibinfo{person}{Ce Zhang},
  \bibinfo{person}{Christian Cosgrove}, \bibinfo{person}{Christopher~D.
  Manning}, \bibinfo{person}{Christopher Ré}, \bibinfo{person}{Diana
  Acosta-Navas}, \bibinfo{person}{Drew~A. Hudson}, \bibinfo{person}{Eric
  Zelikman}, \bibinfo{person}{Esin Durmus}, \bibinfo{person}{Faisal Ladhak},
  \bibinfo{person}{Frieda Rong}, \bibinfo{person}{Hongyu Ren},
  \bibinfo{person}{Huaxiu Yao}, \bibinfo{person}{Jue Wang},
  \bibinfo{person}{Keshav Santhanam}, \bibinfo{person}{Laurel Orr},
  \bibinfo{person}{Lucia Zheng}, \bibinfo{person}{Mert Yuksekgonul},
  \bibinfo{person}{Mirac Suzgun}, \bibinfo{person}{Nathan Kim},
  \bibinfo{person}{Neel Guha}, \bibinfo{person}{Niladri Chatterji},
  \bibinfo{person}{Omar Khattab}, \bibinfo{person}{Peter Henderson},
  \bibinfo{person}{Qian Huang}, \bibinfo{person}{Ryan Chi},
  \bibinfo{person}{Sang~Michael Xie}, \bibinfo{person}{Shibani Santurkar},
  \bibinfo{person}{Surya Ganguli}, \bibinfo{person}{Tatsunori Hashimoto},
  \bibinfo{person}{Thomas Icard}, \bibinfo{person}{Tianyi Zhang},
  \bibinfo{person}{Vishrav Chaudhary}, \bibinfo{person}{William Wang},
  \bibinfo{person}{Xuechen Li}, \bibinfo{person}{Yifan Mai},
  \bibinfo{person}{Yuhui Zhang}, {and} \bibinfo{person}{Yuta Koreeda}.}
  \bibinfo{year}{2022}\natexlab{}.
\newblock \showarticletitle{Holistic Evaluation of Language Models}.
\newblock \bibinfo{journal}{\emph{CoRR}}  \bibinfo{volume}{abs/2211.09110}
  (\bibinfo{year}{2022}).
\newblock
\showeprint[arxiv]{2211.09110}~[cs.CL]
\newblock
\shownote{See latest results at: \url{https://crfm.stanford.edu/helm/latest/}.}


\bibitem[\protect\citeauthoryear{Liu}{Liu}{2011}]%
        {Liu11}
\bibfield{author}{\bibinfo{person}{Tie-Yan Liu}.}
  \bibinfo{year}{2011}\natexlab{}.
\newblock \bibinfo{booktitle}{\emph{Learning to Rank for Information
  Retrieval}}.
\newblock \bibinfo{publisher}{Springer}.
\newblock


\bibitem[\protect\citeauthoryear{Liu, Yu, Zhang, Xu, Lei, Lai, Gu, Ding, Men,
  Yang, Zhang, Deng, Zeng, Du, Zhang, Shen, Zhang, Su, Sun, Huang, Dong, and
  Tang}{Liu et~al\mbox{.}}{2023}]%
        {liu2023agentbench}
\bibfield{author}{\bibinfo{person}{Xiao Liu}, \bibinfo{person}{Hao Yu},
  \bibinfo{person}{Hanchen Zhang}, \bibinfo{person}{Yifan Xu},
  \bibinfo{person}{Xuanyu Lei}, \bibinfo{person}{Hanyu Lai},
  \bibinfo{person}{Yu Gu}, \bibinfo{person}{Hangliang Ding},
  \bibinfo{person}{Kaiwen Men}, \bibinfo{person}{Kejuan Yang},
  \bibinfo{person}{Shudan Zhang}, \bibinfo{person}{Xiang Deng},
  \bibinfo{person}{Aohan Zeng}, \bibinfo{person}{Zhengxiao Du},
  \bibinfo{person}{Chenhui Zhang}, \bibinfo{person}{Sheng Shen},
  \bibinfo{person}{Tianjun Zhang}, \bibinfo{person}{Yu Su},
  \bibinfo{person}{Huan Sun}, \bibinfo{person}{Minlie Huang},
  \bibinfo{person}{Yuxiao Dong}, {and} \bibinfo{person}{Jie Tang}.}
  \bibinfo{year}{2023}\natexlab{}.
\newblock \bibinfo{title}{AgentBench: Evaluating LLMs as Agents}.
\newblock
\newblock
\showeprint[arxiv]{2308.03688}~[cs.AI]


\bibitem[\protect\citeauthoryear{Llinares-L{\'o}pez, Berthet, Blondel, Teboul,
  and Vert}{Llinares-L{\'o}pez et~al\mbox{.}}{2023}]%
        {llinares2023deep}
\bibfield{author}{\bibinfo{person}{Felipe Llinares-L{\'o}pez},
  \bibinfo{person}{Quentin Berthet}, \bibinfo{person}{Mathieu Blondel},
  \bibinfo{person}{Olivier Teboul}, {and} \bibinfo{person}{Jean-Philippe
  Vert}.} \bibinfo{year}{2023}\natexlab{}.
\newblock \showarticletitle{Deep embedding and alignment of protein sequences}.
\newblock \bibinfo{journal}{\emph{Nature Methods}} \bibinfo{volume}{20},
  \bibinfo{number}{1} (\bibinfo{year}{2023}), \bibinfo{pages}{104--111}.
\newblock


\bibitem[\protect\citeauthoryear{Mandt, Hoffman, and Blei}{Mandt
  et~al\mbox{.}}{2016}]%
        {mandt2016variational}
\bibfield{author}{\bibinfo{person}{Stephan Mandt}, \bibinfo{person}{Matthew
  Hoffman}, {and} \bibinfo{person}{David Blei}.}
  \bibinfo{year}{2016}\natexlab{}.
\newblock \showarticletitle{A variational analysis of stochastic gradient
  algorithms}. In \bibinfo{booktitle}{\emph{International conference on machine
  learning}}. PMLR, \bibinfo{pages}{354--363}.
\newblock


\bibitem[\protect\citeauthoryear{Marden}{Marden}{1995}]%
        {marden1995rankdata}
\bibfield{author}{\bibinfo{person}{John~I Marden}.}
  \bibinfo{year}{1995}\natexlab{}.
\newblock \bibinfo{booktitle}{\emph{Analyzing and Modeling Rank Data}}.
\newblock \bibinfo{publisher}{Chapman and Hall}.
\newblock


\bibitem[\protect\citeauthoryear{Mattei and Walsh}{Mattei and Walsh}{2013}]%
        {Preflib_MaWa13a}
\bibfield{author}{\bibinfo{person}{Nicholas Mattei} {and} \bibinfo{person}{Toby
  Walsh}.} \bibinfo{year}{2013}\natexlab{}.
\newblock \showarticletitle{PrefLib: A Library of Preference Data
  \textsc{http://preflib.org}}. In \bibinfo{booktitle}{\emph{Proceedings of the
  3rd International Conference on Algorithmic Decision Theory (ADT 2013)}}
  \emph{(\bibinfo{series}{Lecture Notes in Artificial Intelligence})}.
  \bibinfo{publisher}{Springer}.
\newblock


\bibitem[\protect\citeauthoryear{Maura-Rivero, Lanctot, Visin, and
  Larson}{Maura-Rivero et~al\mbox{.}}{2025}]%
        {maurarivero2025jackpotalignmentmaximallottery}
\bibfield{author}{\bibinfo{person}{Roberto-Rafael Maura-Rivero},
  \bibinfo{person}{Marc Lanctot}, \bibinfo{person}{Francesco Visin}, {and}
  \bibinfo{person}{Kate Larson}.} \bibinfo{year}{2025}\natexlab{}.
\newblock \bibinfo{title}{Jackpot! Alignment as a Maximal Lottery}.
\newblock
\newblock
\showeprint[arxiv]{2501.19266}~[cs.AI]
\urldef\tempurl%
\url{https://arxiv.org/abs/2501.19266}
\showURL{%
\tempurl}


\bibitem[\protect\citeauthoryear{Mensch and Blondel}{Mensch and
  Blondel}{2018}]%
        {mensch2018differentiable}
\bibfield{author}{\bibinfo{person}{Arthur Mensch} {and}
  \bibinfo{person}{Mathieu Blondel}.} \bibinfo{year}{2018}\natexlab{}.
\newblock \showarticletitle{Differentiable dynamic programming for structured
  prediction and attention}. In \bibinfo{booktitle}{\emph{Proc. of ICML}}.
\newblock


\bibitem[\protect\citeauthoryear{Morse}{Morse}{2019}]%
        {Morse19}
\bibfield{author}{\bibinfo{person}{Steven Morse}.}
  \bibinfo{year}{2019}\natexlab{}.
\newblock \bibinfo{title}{Elo as a Statistical Learning Model}.
\newblock
\newblock
\newblock
\shownote{\url{https://stmorse.github.io/journal/Elo.html}. Retrieved January
  2024.}


\bibitem[\protect\citeauthoryear{Murphy}{Murphy}{2022}]%
        {pml1Book}
\bibfield{author}{\bibinfo{person}{Kevin~P. Murphy}.}
  \bibinfo{year}{2022}\natexlab{}.
\newblock \bibinfo{booktitle}{\emph{Probabilistic Machine Learning: An
  introduction}}.
\newblock \bibinfo{publisher}{MIT Press}.
\newblock
\urldef\tempurl%
\url{probml.ai}
\showURL{%
\tempurl}


\bibitem[\protect\citeauthoryear{Paulus, Choi, Tarlow, Krause, and
  Maddison}{Paulus et~al\mbox{.}}{2020}]%
        {paulus2020gradient}
\bibfield{author}{\bibinfo{person}{Max Paulus}, \bibinfo{person}{Dami Choi},
  \bibinfo{person}{Daniel Tarlow}, \bibinfo{person}{Andreas Krause}, {and}
  \bibinfo{person}{Chris~J Maddison}.} \bibinfo{year}{2020}\natexlab{}.
\newblock \showarticletitle{Gradient estimation with stochastic softmax
  tricks}.
\newblock \bibinfo{journal}{\emph{Advances in Neural Information Processing
  Systems}}  \bibinfo{volume}{33} (\bibinfo{year}{2020}),
  \bibinfo{pages}{5691--5704}.
\newblock


\bibitem[\protect\citeauthoryear{Pini, Rossi, Venable, and Walsh}{Pini
  et~al\mbox{.}}{2011}]%
        {pini2011possible}
\bibfield{author}{\bibinfo{person}{Maria~Silvia Pini},
  \bibinfo{person}{Francesca Rossi}, \bibinfo{person}{Kristen~Brent Venable},
  {and} \bibinfo{person}{Toby Walsh}.} \bibinfo{year}{2011}\natexlab{}.
\newblock \showarticletitle{Possible and necessary winners in voting trees:
  majority graphs vs. profiles}. In \bibinfo{booktitle}{\emph{The 10th
  International Conference on Autonomous Agents and Multiagent Systems-Volume
  1}}. \bibinfo{pages}{311--318}.
\newblock


\bibitem[\protect\citeauthoryear{Rajkumar, Ghoshal, Lim, and Agarwal}{Rajkumar
  et~al\mbox{.}}{2015}]%
        {rajkumar15ranking}
\bibfield{author}{\bibinfo{person}{Arun Rajkumar}, \bibinfo{person}{Suprovat
  Ghoshal}, \bibinfo{person}{Lek-Heng Lim}, {and} \bibinfo{person}{Shivani
  Agarwal}.} \bibinfo{year}{2015}\natexlab{}.
\newblock \showarticletitle{Ranking from Stochastic Pairwise Preferences:
  Recovering Condorcet Winners and Tournament Solution Sets at the Top}. In
  \bibinfo{booktitle}{\emph{Proceedings of the 32nd International Conference on
  Machine Learning}} \emph{(\bibinfo{series}{Proceedings of Machine Learning
  Research}, Vol.~\bibinfo{volume}{37})}. \bibinfo{pages}{665--673}.
\newblock


\bibitem[\protect\citeauthoryear{Rico, Vela, and Diaz}{Rico
  et~al\mbox{.}}{2023a}]%
        {Rico23Kemeny}
\bibfield{author}{\bibinfo{person}{Noelia Rico}, \bibinfo{person}{Camino~R.
  Vela}, {and} \bibinfo{person}{Irene Diaz}.} \bibinfo{year}{2023}\natexlab{a}.
\newblock \showarticletitle{Kemeny ranking aggregation meets the {GPU}}.
\newblock \bibinfo{journal}{\emph{Journal of Supercomputing}}
  \bibinfo{volume}{79} (\bibinfo{year}{2023}), \bibinfo{pages}{10335--10352}.
\newblock


\bibitem[\protect\citeauthoryear{Rico, Vela, and Diaz}{Rico
  et~al\mbox{.}}{2023b}]%
        {Rico23Reducing}
\bibfield{author}{\bibinfo{person}{Noelia Rico}, \bibinfo{person}{Camino~R.
  Vela}, {and} \bibinfo{person}{Irene Diaz}.} \bibinfo{year}{2023}\natexlab{b}.
\newblock \showarticletitle{Reducing the time required to find the Kemeny
  ranking by exploiting a necessary condition for being a winner}.
\newblock \bibinfo{journal}{\emph{European Journal of Operational Research}}
  \bibinfo{volume}{305}, \bibinfo{number}{3} (\bibinfo{year}{2023}).
\newblock


\bibitem[\protect\citeauthoryear{Rofin, Mikhailov, Florinsky, Kravchenko,
  Shavrina, Tutubalina, Karabekyan, and Artemova}{Rofin et~al\mbox{.}}{2023}]%
        {rofin-etal-2023-votenrank}
\bibfield{author}{\bibinfo{person}{Mark Rofin}, \bibinfo{person}{Vladislav
  Mikhailov}, \bibinfo{person}{Mikhail Florinsky}, \bibinfo{person}{Andrey
  Kravchenko}, \bibinfo{person}{Tatiana Shavrina}, \bibinfo{person}{Elena
  Tutubalina}, \bibinfo{person}{Daniel Karabekyan}, {and}
  \bibinfo{person}{Ekaterina Artemova}.} \bibinfo{year}{2023}\natexlab{}.
\newblock \showarticletitle{Vote{'}n{'}Rank: Revision of Benchmarking with
  Social Choice Theory}. In \bibinfo{booktitle}{\emph{Proceedings of the 17th
  Conference of the European Chapter of the Association for Computational
  Linguistics}}, \bibfield{editor}{\bibinfo{person}{Andreas Vlachos} {and}
  \bibinfo{person}{Isabelle Augenstein}} (Eds.).
  \bibinfo{publisher}{Association for Computational Linguistics},
  \bibinfo{address}{Dubrovnik, Croatia}, \bibinfo{pages}{670--686}.
\newblock
\urldef\tempurl%
\url{https://doi.org/10.18653/v1/2023.eacl-main.48}
\showDOI{\tempurl}


\bibitem[\protect\citeauthoryear{Samuel}{Samuel}{1959}]%
        {Samuel59}
\bibfield{author}{\bibinfo{person}{Arthur~L. Samuel}.}
  \bibinfo{year}{1959}\natexlab{}.
\newblock \showarticletitle{Some Studies in Machine Learning Using the Game of
  Checkers}.
\newblock \bibinfo{journal}{\emph{IBM Journal of Research and Development}}
  \bibinfo{volume}{44} (\bibinfo{year}{1959}), \bibinfo{pages}{206--226}.
\newblock


\bibitem[\protect\citeauthoryear{Sander, Puigcerver, Djolonga, Peyr{\'e}, and
  Blondel}{Sander et~al\mbox{.}}{2023}]%
        {sander2023fast}
\bibfield{author}{\bibinfo{person}{Michael~E Sander}, \bibinfo{person}{Joan
  Puigcerver}, \bibinfo{person}{Josip Djolonga}, \bibinfo{person}{Gabriel
  Peyr{\'e}}, {and} \bibinfo{person}{Mathieu Blondel}.}
  \bibinfo{year}{2023}\natexlab{}.
\newblock \showarticletitle{Fast, Differentiable and Sparse Top-k: a Convex
  Analysis Perspective}.
\newblock \bibinfo{journal}{\emph{arXiv preprint arXiv:2302.01425}}
  (\bibinfo{year}{2023}).
\newblock


\bibitem[\protect\citeauthoryear{Shah and Wainwright}{Shah and
  Wainwright}{2018}]%
        {shahwainwright2018stochastictransitivity}
\bibfield{author}{\bibinfo{person}{Nihar~B Shah} {and}
  \bibinfo{person}{Martin~J Wainwright}.} \bibinfo{year}{2018}\natexlab{}.
\newblock \showarticletitle{Simple, Robust and Optimal Ranking from Pairwise
  Comparisons}.
\newblock \bibinfo{journal}{\emph{Journal of machine learning research: JMLR}}
  \bibinfo{volume}{18}, \bibinfo{number}{199} (\bibinfo{year}{2018}),
  \bibinfo{pages}{1--38}.
\newblock


\bibitem[\protect\citeauthoryear{Silver, Huang, Maddison, Guez, Sifre, van~den
  Driessche, Schrittwieser, Antonoglou, Panneershelvam, Lanctot, Dieleman,
  Grewe, Nham, Kalchbrenner, Sutskever, Lillicrap, Leach, Kavukcuoglu, Graepel,
  and Hassabis}{Silver et~al\mbox{.}}{2016}]%
        {Silver16Go}
\bibfield{author}{\bibinfo{person}{David Silver}, \bibinfo{person}{Aja Huang},
  \bibinfo{person}{Chris~J. Maddison}, \bibinfo{person}{Arthur Guez},
  \bibinfo{person}{Laurent Sifre}, \bibinfo{person}{George van~den Driessche},
  \bibinfo{person}{Julian Schrittwieser}, \bibinfo{person}{Ioannis Antonoglou},
  \bibinfo{person}{Veda Panneershelvam}, \bibinfo{person}{Marc Lanctot},
  \bibinfo{person}{Sander Dieleman}, \bibinfo{person}{Dominik Grewe},
  \bibinfo{person}{John Nham}, \bibinfo{person}{Nal Kalchbrenner},
  \bibinfo{person}{Ilya Sutskever}, \bibinfo{person}{Timothy Lillicrap},
  \bibinfo{person}{Madeleine Leach}, \bibinfo{person}{Koray Kavukcuoglu},
  \bibinfo{person}{Thore Graepel}, {and} \bibinfo{person}{Demis Hassabis}.}
  \bibinfo{year}{2016}\natexlab{}.
\newblock \showarticletitle{Mastering the game of {G}o with deep neural
  networks and tree search}.
\newblock \bibinfo{journal}{\emph{Nature}}  \bibinfo{volume}{529}
  (\bibinfo{year}{2016}), \bibinfo{pages}{484--489}.
\newblock


\bibitem[\protect\citeauthoryear{Silver, Hubert, Schrittwieser, Antonoglou,
  Lai, Guez, Lanctot, Sifre, Kumaran, Graepel, Lillicrap, Simonyan, and
  Hassabis}{Silver et~al\mbox{.}}{2018}]%
        {Silver18AlphaZero}
\bibfield{author}{\bibinfo{person}{David Silver}, \bibinfo{person}{Thomas
  Hubert}, \bibinfo{person}{Julian Schrittwieser}, \bibinfo{person}{Ioannis
  Antonoglou}, \bibinfo{person}{Matthew Lai}, \bibinfo{person}{Arthur Guez},
  \bibinfo{person}{Marc Lanctot}, \bibinfo{person}{Laurent Sifre},
  \bibinfo{person}{Dharshan Kumaran}, \bibinfo{person}{Thore Graepel},
  \bibinfo{person}{Timothy Lillicrap}, \bibinfo{person}{Karen Simonyan}, {and}
  \bibinfo{person}{Demis Hassabis}.} \bibinfo{year}{2018}\natexlab{}.
\newblock \showarticletitle{A general reinforcement learning algorithm that
  masters chess, shogi, and {G}o through self-play}.
\newblock \bibinfo{journal}{\emph{Science}} \bibinfo{volume}{632},
  \bibinfo{number}{6419} (\bibinfo{year}{2018}), \bibinfo{pages}{1140--1144}.
\newblock


\bibitem[\protect\citeauthoryear{Smith}{Smith}{1973}]%
        {Smith73}
\bibfield{author}{\bibinfo{person}{J.~H. Smith}.}
  \bibinfo{year}{1973}\natexlab{}.
\newblock \showarticletitle{Aggregation of preferences with variable
  electorate}.
\newblock \bibinfo{journal}{\emph{Econometrica}}  \bibinfo{volume}{41}
  (\bibinfo{year}{1973}), \bibinfo{pages}{1027--1041}.
\newblock


\bibitem[\protect\citeauthoryear{Stephan, Hoffman, Blei, et~al\mbox{.}}{Stephan
  et~al\mbox{.}}{2017}]%
        {stephan2017stochastic}
\bibfield{author}{\bibinfo{person}{Mandt Stephan}, \bibinfo{person}{Matthew~D
  Hoffman}, \bibinfo{person}{David~M Blei}, {et~al\mbox{.}}}
  \bibinfo{year}{2017}\natexlab{}.
\newblock \showarticletitle{Stochastic gradient descent as approximate bayesian
  inference}.
\newblock \bibinfo{journal}{\emph{Journal of Machine Learning Research}}
  \bibinfo{volume}{18}, \bibinfo{number}{134} (\bibinfo{year}{2017}),
  \bibinfo{pages}{1--35}.
\newblock


\bibitem[\protect\citeauthoryear{Stewart, Bach, L{\'o}pez, and Berthet}{Stewart
  et~al\mbox{.}}{2022}]%
        {stewart2023differentiable}
\bibfield{author}{\bibinfo{person}{Lawrence Stewart},
  \bibinfo{person}{Francis~S Bach}, \bibinfo{person}{Felipe~Llinares
  L{\'o}pez}, {and} \bibinfo{person}{Quentin Berthet}.}
  \bibinfo{year}{2022}\natexlab{}.
\newblock \showarticletitle{Differentiable Clustering with Perturbed Spanning
  Forests}, In \bibinfo{booktitle}{Advances in Neural Information Processing
  Systems}.
\newblock \bibinfo{journal}{\emph{arXiv preprint arXiv:2305.16358}}.
\newblock


\bibitem[\protect\citeauthoryear{Swamy, Dann, Kidambi, Wu, and Agarwal}{Swamy
  et~al\mbox{.}}{2024}]%
        {swamy2024minimaximalist}
\bibfield{author}{\bibinfo{person}{Gokul Swamy}, \bibinfo{person}{Christoph
  Dann}, \bibinfo{person}{Rahul Kidambi}, \bibinfo{person}{Zhiwei~Steven Wu},
  {and} \bibinfo{person}{Alekh Agarwal}.} \bibinfo{year}{2024}\natexlab{}.
\newblock \bibinfo{title}{A Minimaximalist Approach to Reinforcement Learning
  from Human Feedback}.
\newblock
\newblock


\bibitem[\protect\citeauthoryear{Syed}{Syed}{[n.d.]}]%
        {arimaa}
\bibfield{author}{\bibinfo{person}{Aamir Syed, Omar;~Syed}.}
  \bibinfo{year}{[n.d.]}\natexlab{}.
\newblock \showarticletitle{Arimaa – a New Game Designed to be Difficult for
  Computers}.
\newblock \bibinfo{journal}{\emph{International Computer Games Association
  Journal}}  \bibinfo{volume}{26} (\bibinfo{year}{[n.\,d.]}),
  \bibinfo{pages}{138--139}.
\newblock


\bibitem[\protect\citeauthoryear{Tesauro}{Tesauro}{1995}]%
        {tdgammon}
\bibfield{author}{\bibinfo{person}{Gerald Tesauro}.}
  \bibinfo{year}{1995}\natexlab{}.
\newblock \showarticletitle{Temporal Difference Learning and {TD}-Gammon}.
\newblock \bibinfo{journal}{\emph{Commun. ACM}} \bibinfo{volume}{38},
  \bibinfo{number}{3} (\bibinfo{year}{1995}), \bibinfo{pages}{58--68}.
\newblock


\bibitem[\protect\citeauthoryear{Udell}{Udell}{2020}]%
        {Udell20SP}
\bibfield{author}{\bibinfo{person}{Madaleine Udell}.}
  \bibinfo{year}{2020}\natexlab{}.
\newblock \bibinfo{title}{SigmoidalProgramming (Julia package)}.
\newblock
\newblock
\newblock
\shownote{\url{https://github.com/madeleineudell/SigmoidalProgramming.jl}.
  Retrieved August 2024.}


\bibitem[\protect\citeauthoryear{Udell and Boyd}{Udell and Boyd}{2014}]%
        {Udell2014MaximizingAS}
\bibfield{author}{\bibinfo{person}{Madeleine Udell} {and}
  \bibinfo{person}{Stephen~P. Boyd}.} \bibinfo{year}{2014}\natexlab{}.
\newblock \showarticletitle{Maximizing a Sum of Sigmoids}.
\newblock
\urldef\tempurl%
\url{https://api.semanticscholar.org/CorpusID:18061910}
\showURL{%
\tempurl}


\bibitem[\protect\citeauthoryear{Vinyals, Babuschkin, Czarnecki, Mathieu,
  Dudzik, Chung, Choi, Powell, Ewalds, Georgiev, Oh, Horgan, Kroiss, Danihelka,
  Huang, Sifre, Cai, Agapiou, Jaderberg, Vezhnevets, Leblond, Pohlen, Dalibard,
  Budden, Sulsky, Molloy, Paine, Gulcehre, Wang, Pfaff, Wu, Ring, Yogatama,
  W{\"u}nsch, McKinney, Smith, Schaul, Lillicrap, Kavukcuoglu, Hassabis, Apps,
  and Silver}{Vinyals et~al\mbox{.}}{2019}]%
        {Vinyals19Alphastar}
\bibfield{author}{\bibinfo{person}{Oriol Vinyals}, \bibinfo{person}{Igor
  Babuschkin}, \bibinfo{person}{Wojciech~M. Czarnecki},
  \bibinfo{person}{Micha{\"e}l Mathieu}, \bibinfo{person}{Andrew Dudzik},
  \bibinfo{person}{Junyoung Chung}, \bibinfo{person}{David~H. Choi},
  \bibinfo{person}{Richard Powell}, \bibinfo{person}{Timo Ewalds},
  \bibinfo{person}{Petko Georgiev}, \bibinfo{person}{Junhyuk Oh},
  \bibinfo{person}{Dan Horgan}, \bibinfo{person}{Manuel Kroiss},
  \bibinfo{person}{Ivo Danihelka}, \bibinfo{person}{Aja Huang},
  \bibinfo{person}{Laurent Sifre}, \bibinfo{person}{Trevor Cai},
  \bibinfo{person}{John~P. Agapiou}, \bibinfo{person}{Max Jaderberg},
  \bibinfo{person}{Alexander~S. Vezhnevets}, \bibinfo{person}{R{\'e}mi
  Leblond}, \bibinfo{person}{Tobias Pohlen}, \bibinfo{person}{Valentin
  Dalibard}, \bibinfo{person}{David Budden}, \bibinfo{person}{Yury Sulsky},
  \bibinfo{person}{James Molloy}, \bibinfo{person}{Tom~L. Paine},
  \bibinfo{person}{Caglar Gulcehre}, \bibinfo{person}{Ziyu Wang},
  \bibinfo{person}{Tobias Pfaff}, \bibinfo{person}{Yuhuai Wu},
  \bibinfo{person}{Roman Ring}, \bibinfo{person}{Dani Yogatama},
  \bibinfo{person}{Dario W{\"u}nsch}, \bibinfo{person}{Katrina McKinney},
  \bibinfo{person}{Oliver Smith}, \bibinfo{person}{Tom Schaul},
  \bibinfo{person}{Timothy Lillicrap}, \bibinfo{person}{Koray Kavukcuoglu},
  \bibinfo{person}{Demis Hassabis}, \bibinfo{person}{Chris Apps}, {and}
  \bibinfo{person}{David Silver}.} \bibinfo{year}{2019}\natexlab{}.
\newblock \showarticletitle{Grandmaster level in StarCraft II using multi-agent
  reinforcement learning}.
\newblock \bibinfo{journal}{\emph{Nature}} \bibinfo{volume}{575},
  \bibinfo{number}{7782} (\bibinfo{year}{2019}), \bibinfo{pages}{350--354}.
\newblock
\showISBNx{1476-4687}
\urldef\tempurl%
\url{https://doi.org/10.1038/s41586-019-1724-z}
\showDOI{\tempurl}


\bibitem[\protect\citeauthoryear{Vlastelica, Paulus, Musil, Martius, and
  Rol{\'\i}nek}{Vlastelica et~al\mbox{.}}{2019}]%
        {vlastelica2019differentiation}
\bibfield{author}{\bibinfo{person}{Marin Vlastelica}, \bibinfo{person}{Anselm
  Paulus}, \bibinfo{person}{V{\'\i}t Musil}, \bibinfo{person}{Georg Martius},
  {and} \bibinfo{person}{Michal Rol{\'\i}nek}.}
  \bibinfo{year}{2019}\natexlab{}.
\newblock \showarticletitle{Differentiation of blackbox combinatorial solvers}.
\newblock \bibinfo{journal}{\emph{arXiv preprint arXiv:1912.02175}}
  (\bibinfo{year}{2019}).
\newblock


\bibitem[\protect\citeauthoryear{Xia}{Xia}{2019}]%
        {Xia19}
\bibfield{author}{\bibinfo{person}{Lirong Xia}.}
  \bibinfo{year}{2019}\natexlab{}.
\newblock \bibinfo{booktitle}{\emph{Learning and Decision-Making from Rank
  Data}}.
\newblock \bibinfo{publisher}{Springer}.
\newblock


\bibitem[\protect\citeauthoryear{Xia and Conitzer}{Xia and Conitzer}{2011}]%
        {xia2011determining}
\bibfield{author}{\bibinfo{person}{Lirong Xia} {and} \bibinfo{person}{Vincent
  Conitzer}.} \bibinfo{year}{2011}\natexlab{}.
\newblock \showarticletitle{Determining possible and necessary winners given
  partial orders}.
\newblock \bibinfo{journal}{\emph{Journal of Artificial Intelligence Research}}
   \bibinfo{volume}{41} (\bibinfo{year}{2011}), \bibinfo{pages}{25--67}.
\newblock


\bibitem[\protect\citeauthoryear{Young and Levenglick}{Young and
  Levenglick}{1978}]%
        {young1978consistent}
\bibfield{author}{\bibinfo{person}{H.~P. Young} {and} \bibinfo{person}{A.
  Levenglick}.} \bibinfo{year}{1978}\natexlab{}.
\newblock \showarticletitle{A consistent extension of {C}ondorcet's election
  principle}.
\newblock \bibinfo{journal}{\emph{{SIAM} Journal of Applied Mathematics}}
  \bibinfo{volume}{35}, \bibinfo{number}{2} (\bibinfo{year}{1978}),
  \bibinfo{pages}{285--300}.
\newblock


\bibitem[\protect\citeauthoryear{Zheng, Chiang, Sheng, Zhuang, Wu, Zhuang, Lin,
  Li, Li, Xing, Zhang, Gonzalez, and Stoica}{Zheng et~al\mbox{.}}{2023}]%
        {zheng2023judging}
\bibfield{author}{\bibinfo{person}{Lianmin Zheng}, \bibinfo{person}{Wei-Lin
  Chiang}, \bibinfo{person}{Ying Sheng}, \bibinfo{person}{Siyuan Zhuang},
  \bibinfo{person}{Zhanghao Wu}, \bibinfo{person}{Yonghao Zhuang},
  \bibinfo{person}{Zi Lin}, \bibinfo{person}{Zhuohan Li},
  \bibinfo{person}{Dacheng Li}, \bibinfo{person}{Eric.~P Xing},
  \bibinfo{person}{Hao Zhang}, \bibinfo{person}{Joseph~E. Gonzalez}, {and}
  \bibinfo{person}{Ion Stoica}.} \bibinfo{year}{2023}\natexlab{}.
\newblock \bibinfo{title}{Judging LLM-as-a-judge with MT-Bench and Chatbot
  Arena}.
\newblock
\newblock
\showeprint[arxiv]{2306.05685}~[cs.CL]
\newblock
\shownote{See also \url{https://lmsys.org/blog/2023-05-03-arena/} and
  \url{https://huggingface.co/spaces/lmsys/chatbot-arena-leaderboard}. Accessed
  August 13th, 2023.}


\end{thebibliography}


\onecolumn
\clearpage
\appendix

\section{Proofs of Theorems}
\label{app:proofs}

Recall that for a sigmoid $\sigma(x) = 1 - \sigma(-x)$ and the definitions of the $\bN$ and $\bM$ matrices from Section~\ref{sec:background-sct}.

First, we will introduce some helpful notation and state a few facts. For a set of alternatives $A = \{ a, b, \cdots \}$ and any ranking (total order) $R$ let $a \succ b \in A$ refer to the set of combinations where the left alternative is ranked higher the right alternative:
\[
\{ (a, b) \in A \times A ~|~ a \succ b \}.
\]
This ranking can be arbitrary and need not refer to one returned by a social welfare function; \eg for $A = \{ a_1, a_2, \cdots, a_m \}$, then one natural ranking is by numerical indices so that
\[
a \succ b \in A = \{ (a_1, a_2), (a_1, a_3),~\cdots~, (a_1, a_m), (a_2, a_3), (a_2, a_4),~\cdots`, (a_2, a_m), (a_3, a_4),~\cdots~, (a_{m-1}, a_m) \}.
\]
When the ranking is not clear from the context, assume any default ranking is used to enumerate all combinations of size two.

Recall that the soft Kendall-tau loss (equation~\ref{eq:sco-loss}) can be expressed as a sum of sigmoids multiplied by coefficients from the $N$ matrix (equation~\ref{eq:sco-loss-sum-with-n}). $\tilde{L}$ can be rewritten instead in terms of $M$:
\begin{eqnarray*}
\tilde{L}([\succ], A, V, \btheta)& = & \sum_{a,b \in A \times A} N(a,b) \sigma(\theta_b - \theta_a) \\
& = & \sum_{a \succ b \in A} [ N(a,b) \sigma(\theta_b - \theta_a) + N(b,a)\sigma(\theta_a - \theta_b) ] \text{\hspace{2cm} since } \forall a \in A, N(a, a) = 0\\
& = & \sum_{a \succ b \in A} [ N(a,b) \sigma(\theta_b - \theta_a) + N(b,a) (1 - \sigma(\theta_b - \theta_a) ] \text{\hspace{2cm} since } \sigma(x) = 1 - \sigma(-x)\\
& = & \sum_{a \succ b \in A} [ N(b,a) + (N(a,b) - N(b,a)) \sigma(\theta_b - \theta_a) ]\\
& = & \sum_{a \succ b \in A} N(b,a) + \sum_{a \succ b \in A} M(a,b) \sigma(\theta_b - \theta_a).\\
\end{eqnarray*}
We now restate and prove the Theorem~\ref{th:sigmoid-condorcet}.

\begin{customthm}{1}
Given the sum of soft Kendall-tau distances:
\begin{eqnarray}
\tilde{L}([\succ], A, V, \btheta) & = & \sum_{a, b \in A \times A} N(a,b) \sigma(\theta_b - \theta_a)\, .
\end{eqnarray}
If for preference profile $[\succ]$, voters $V$, there exists a candidate $c \in A$ that is a Condorcet winner, the loss is monotonically decreasing with $\theta_c$. As a consequence, if $\btheta^*$ is a global minimum of $\tilde{L}$ on the $\ell_\infty$ ball of radius $\theta_{\max}$, then $\theta^*_c = \theta_{\max}$.
\end{customthm}

\begin{proof}[Proof of Theorem~\ref{th:sigmoid-condorcet}]
Without loss of generality, we assume that $c \in A$ is the first element in some arbitrary order, and we rewrite the loss as
\begin{eqnarray}
\tilde{L}([\succ], A, V, \btheta) & = & K + \sum_{a \succ b \in A} M(a,b) \sigma(\theta_b - \theta_a),
\end{eqnarray}
where $K$ is a constant that does not depend on $\btheta$, as seen above. 
Since $c \succ b$ for all $b \in A$, we rewrite this loss as
\begin{eqnarray}
\tilde{L}([\succ], A, V, \btheta) & = & K + \sum_{b \in A} M(c,b) \sigma(\theta_b - \theta_c) + \sum_{a \succ b \in A_{-c}} M(a,b) \sigma(\theta_b - \theta_a)\, ,
\end{eqnarray}
where $A_{-c}$ is the set of all candidates  except $c \in A$. 
Note that the first term is a constant that does not depend on $\btheta$. 
The second term is monotonically decreasing in $\theta_c$, since $M(c, b) > 0$ for all $b \in A_{-c}$ since $c$ is a Condorcet winner, and $\sigma$ is an increasing function. Further, the third term does not depend on $\theta_c$. As a consequence, for any vector $\theta$, increasing $\theta_c$ while leaving all other ratings equal decreases the loss.

As a consequence, if $\btheta^*$ is a global minimum on the $\ell_\infty$ ball of radius $\theta_{\max}$, $\theta^*_c = \theta_{\max}$. Indeed, if we had $\theta^*_c < \theta_{\max}$, increasing $\theta^*_c$ would decrease the loss, leading to a contradiction. 
\end{proof}

This result shows that if there exists a Condorcet winner, there is no global minimum of $\R^{|A|}$ of the loss, since it is always better to increase $\theta_c$. It also shows that for any vector of ratings $\theta_{-c}$, the loss is smaller for $\theta_c \ge \max_{a \in A_{-c}} \theta_{a}$ than if $c$ did not have one of the highest ratings.

When there are constraints on $\btheta$, depending on the specific constraints, this can lead to several global minimizers. There can be several candidates with a rating of $\theta_{\max}$, even if only one of them is a Condorcet winner.

\section{Fenchel-Young Loss Optimization}
\label{app:fy-loss-min}

One of the core concepts of this method is the use of stochastic smoothing in the ranking function. We consider the function $\by^*$ that maps a vector $\btheta \in [-\theta_{\min}, \theta_{\max}]^m$ (where $m=|A|$) to the vector of its ranks - i.e. the vector $(0, 1, \ldots, m-1)$ permuted in the same order as the coefficients of $\theta$. Its perturbed version defined by $\by_\varepsilon^*(\theta) = \E[\by^*(\theta + \varepsilon Z)]$, for a vector of Gumbel distributed noise $Z$ and $\varepsilon>0$ is a smooth differentiable function, akin to the softmax function for one-hot argmax. It can be estimated without bias by Monte Carlo methods, averaging $\by^*(\theta + \varepsilon Z^{(i)})$, for user-generated $Z^{(i)}s$.

The Fenchel-Young loss associated to this function for a single data point of observed ranks $y$ is given by $L_\varepsilon(\theta, \by) = F_\varepsilon(\theta) - \by^\top \theta$, where $F_\varepsilon(\theta) = \E[\by^*(\theta + \varepsilon Z)^\top (\theta + \varepsilon Z)]$. Its gradient is given by
\[
\nabla_\theta L_\varepsilon(\theta, \by) = \by^*_\varepsilon(\theta) - \by\, .
\]
For a batch $B$ of size $|B|$ of votes on subsets $S_1,~\ldots~, S_{|B|}$ of candidates $A$ and associated vectors of ranks $y_1, ~\ldots~, y_{|B|}$ (each $y_i$ corresponding to observed partial rankings of elements in $S_i$), the stochastic gradient update is given by
\[
g_{FY} = \frac{1}{|B|} \sum_{i=1}^{|B|} \pi_{S_i} (y^*_\varepsilon(\theta_{S_i}) - y_i)\, ,
\]
where $\pi_{S_i}$ linearly maps the $|S_i|$ coefficients in $(y^*_\varepsilon(\theta_{S_i}) - y_i)$ to the corresponding coefficients in the $|A|$ coefficients of $\theta$. 

In practice, it can be estimated without bias, by replacing $y^*_\varepsilon(\theta_{S_i})$ by a stochastic approximation
\[
\tilde g_{FY} = \frac{1}{|B|} \sum_{i=1}^{|B|} \pi_{S_i} (y^*(\theta_{S_i} + \varepsilon Z^{i}) - y_i)\, .
\]
This loss, beyond having gradients that can be efficiently computed (requiring only to perform sorting on perturbed coefficients of $\theta$) is differentiable and convex in $\theta$ and can thus be efficiently over a convex set. Further, if the observed ranks $y_i$ are indeed generated from noisy sorting of a vector of ranks $\theta_0$, then the loss is minimized at $\theta_0$ (see \cite{Berthet20} details).

\section{Solving a Sigmoidal Program to Compute SCO Ratings}
\label{app:sco-as-sp}

Recall, from Section~\ref{sec:sco-via-sp}:

\begin{eqnarray*}
\tilde{L}([\succ], A, V, \btheta) & = & \sum_{a, b \in A \times A} N(a,b) \sigma(\theta_b - \theta_a)
\end{eqnarray*}
A sigmoidal program~\cite{Udell2014MaximizingAS} has the form:
\begin{eqnarray*}
\text{minimize}    & \sum_i f_i(x_i) \\
\text{subject to}  & \bA \bx \le \bb \\
                   & \bC \bx = \bd \\
                   & l \le x_i \le u, \\
\end{eqnarray*}
whose objective is a sum of {\it sigmoidal functions} $f_i(x)$, defined as functions which are strictly convex on domain $x \le z$ and then strictly concave on $x \ge z$ (\ie $z = 0$ in our case). Note that $N(a,b) \ge 0$ for all $a, b \in A$.

Each variable $x_i$ corresponds to $\theta_a - \theta_b$, for $a, b \in A, a \not= b$, so there are $|A|(|A|-1) = \Omega(m^2)$ variables. The difference is lower-bounded by the parameter value range, so $u = \theta_{max} - \theta_{min}$ and $l = -u$.

There are no inequality constraints so $\bA = 0$ and $\bb = 0$. There are two types of equality constraints that are represented by $\bC$ and $\bd$:
\begin{enumerate}
    \item Constraints of the form $x_i + x_j = 0$ representing the fact that $(\theta_b - \theta_a) + (\theta_a - \theta_b) = 0$.
    \item Constraints to encode transitivity of the form $(\theta_b - \theta_a) + (\theta_a - \theta_c) = \theta_b - \theta_c.$
\end{enumerate}

\section{Relationships between SCO and Prior Work}
\label{sec:relationships-to-prior-work}

\subsection{Relationship between SCO and Probabilistic Ranking}
\label{sec:probranking}

The problem of selecting an ordering that satisfies voters' preferences has also been studied in statistics and psychology through probabilistic models defined over the space of permutations~\citep{diaconis1988rankprob, marden1995rankdata, alvo2014rankprob}. In particular, distance-based models~\citep{fligner1986distancebased} assume that voters have a true, collective preference over alternatives, $v^*$, and that the probability of observing any individual preference, $v \in V\subseteq [\succ]$, is proportional to its distance from $v^*$. Thus, the probability of any individual vote $v \in V$ is
\begin{equation}
    p(v) \propto \text{exp}(-K(v, v^*)) \label{eqn:prob_model}
\end{equation}
where $K$ is some metric between rankings. 

Abusing notation slightly, let $p^*(v)$ and $p_{\btheta}(v)$ be the distance-based 
probability distributions induced by \eqref{eqn:prob_model} for true 
collective ranking $v^*$ and some ranking $v_{\btheta}$ induced by parameters 
$\btheta$. 
To minimize the KL divergence between the true and parameterized distributions  
\begin{equation}
    \min_{\btheta} D_{KL}(p^*(v) || p_\theta(v)),
\end{equation}
we maximize the likelihood of the rankings under the observed rankings or votes $v\in V$ by optimizing the MLE objective:
\begin{align}
    \min_{\btheta} \mathbb{E}_{v \sim p^*(v)} \left[K(v, v_{\btheta})\right].
\end{align}
When $K = K_{d}$ is the Kendall-tau distance and we apply the smooth approximation in \eqref{eq:sigmoid-discrepancy-func}, we recover the SCO objective:  
\begin{equation}
    \min_{\btheta} \frac{1}{|V|} \sum_{v \in V} \left[\sum_{(a,b) \in \binom{m}{2}} \sigma(\btheta_{a} - \btheta_{b})\right].
\end{equation}

Since SCO falls within the class of distance-based models, it inherits a number of desirable properties, including  \emph{strong unimodality}~\citep{critchlow1991probabilitymodels,alvo2014rankprob}. This property states that if a true  collective preference $v^*$ exists,  then the probability of any preference $v$ is non-increasing as it moves away from $v^*$, with the ramification being that the MLE retrieves this unique (modal) ranking.  On the other hand, for other probabilistic models, such as the pairwise comparison models which include Bradley-Terry and Elo~\citep{BradleyTerry52, Elo78, Hunter04btmodels}, strong unimodality is not guaranteed in general, and instead requires (strong) stochastic transitivity among pairwise individual preferences~\citep{critchlow1991probabilitymodels, alvo2014rankprob, shahwainwright2018stochastictransitivity}, a property that it not always observed in agent evaluation data~\citep{bertrand2023elolim,Boubdir2023-nz}.


\subsubsection{Elo and Probablistic Rankings} As mentioned, Elo can be interpreted as an example of a pairwise-comparison based probabilistic ranking model. Since Elo is heavily used in the agent-evaluation context, we believe it is useful to fully illustrate this connection. This also allows us to further make a clear distinction between SCO and Elo. 

A complete preference $v \in [\succ]$ over the set of alternatives $A$ contains $\binom{|A|}{2}$ pairwise comparisons. Let the $\binom{|A|}{2}$ Bernoulli-distributed random variables  $\boldsymbol{X} = \{X_1, \ldots, X_{\binom{|A|}{2}}\}$ denote the outcome of observing every pairwise comparison on $v$. Each random variable $X_k$ represents the probability that alternative $a \in A$ is (strictly) preferred over alternative $b$. Thus, every Bernoulli parameter $p_i$ models the probability $p(a \succ b)$ as $p(x_k = 1) = p_k$ and the converse preference $p(b \succ a)$ as $p(x_k = 0) = 1 - p_k$. Pairwise comparison models~\citep{BradleyTerry52, critchlow1991probabilitymodels} assume that preferences between pairs of alternatives relate to their ratings $\btheta$ as:
\begin{align*}
    p(a \succ b) &= p(x_k = 1) = p_k = \sigma(\btheta_a - \btheta_b)\\
    p(b \succ a) &= p(x_k = 0) = 1-p_k = 1 - \sigma(\btheta_a - \btheta_b) = \sigma(\btheta_b - \btheta_a)
\end{align*}

Given this framework and letting $p^*(\boldsymbol{X})$ model the true joint probability distribution of observing the $\binom{|A|}{2}$ pairwise rankings, and $p_{\btheta}(\boldsymbol{X})$ be the  joint probability distribution induced by ratings $\btheta$, minimizing the KL divergence
\begin{align}
    & \min_\theta D_{KL}\left(p^*(\boldsymbol{X}) \big\| p_{\btheta}(\boldsymbol{X}) \right)
\end{align}
and assuming the independence of the observed pairwise comparisons, is equivalent to minimizing the negative log-likelihood:
\begin{align}
    \min_{\btheta} \sum_{k=1}^{\binom{|A|}{2}}\mathbb{E}_{x_k \sim p^*(X_k)}\left[-\log p_{\btheta}(x_k)\right]
\end{align}

When optimized over some set of votes  $v \in V\subseteq [\succ]$ this  leads to the objective:
\begin{equation}
    \min_{\btheta} \frac{1}{|V|} \sum_{v \in V} \sum_{i=1}^{\binom{|A|}{2}}\mathbb{E}_{x_k \sim p^*(X_k)}\left[-\log p_{\btheta}(x_k)\right].
\end{equation}




\subsection{Relationship between SCO and Elo}
\label{sec:relationship-to-elo}

Before we show the relation between SCO and Elo, we first summarize a well-known relationship between Elo and logistic regression~\cite{Morse19}. 

Suppose player $i$ plays player $j$ and the outcome is $y_{ij} = 1$ if $i$ won and $0$ otherwise. Recall from that Elo's online update rule is:
\begin{equation}
\theta_i \leftarrow \theta_i + K (y_{ij} - p_{ij}),
\label{eq:elo-online-update}
\end{equation}
where
\[p_{ij} = \frac{1}{1 + 10^{(\theta_j - \theta_i)/400}}\]
is Elo's predicted probability (of agent $i$ beating agent $j$), and $K = 32$~\cite{Elo78}. 
Logistic regression is a method for classification that models the prediction as a logistic function of a linear function between parameters and features~\cite{Cramer02,pml1Book}. In binary classification, data points $(\bx, y)$ comprise input vectors $\bx$ and classes $y \in \{ 0, 1 \}$. Define the log odds of the data point given a label of 1:
\[
\textsc{LogOdds}(\bx, y=1) = \log_{10} \left( \cdot \frac{\Pr(\bx~|~y=1)\Pr(y=1)}{\Pr(\bx~|~y = 0)\Pr(y=0)} \right).
\]

The probability of class 1 for a given $\bx$ is (via Bayes' rule) is
\begin{eqnarray*}
\Pr(y = 1~|~\bx) & = & \frac{\Pr(\bx | y = 1)\Pr(y=1)}{\Pr(\bx | y = 1)\Pr(y=1) + \Pr(\bx | y = 0)\Pr(y=0)}\\
& = & \frac{1}{1 + \frac{\Pr(\bx | y = 0)\Pr(y=0)}{\Pr(\bx | y = 1)\Pr(y=1)}}\\
& = & \frac{1}{1 + 10^{\textsc{LogOdds}(\bx, y=1) / 400}}.
\end{eqnarray*}
Let $p_k$ be the probability predicted by the model for data point $(\bx_k, y_k)$. The standard loss (error) used for a data point $(\bx_k, y_k)$ is a log loss $\ell_k = -\log p_k$ if $y_k = 1$ and $-\log(1 - p_k)$ if $y_k = 0$, which is combined into a single loss term:
\begin{equation}
\ell_k = -y_k \log p_k - (1 - y_k) \log (1 - p_k), 
\label{eq:log-loss}
\end{equation}
and the loss over a data set of $K$ examples is $\ell = \sum_{k = 1}^{K} \ell_k$.
The prediction is modeled using parameters $\btheta$ such that $p_k = \sigma(\btheta^T \bx_k)$. The derivative of the sigmoid is conveniently $\frac{d}{dx} \sigma(x) = \sigma(x) (1 - \sigma(x))$, so the gradient of the loss for point $k$ turns out to be:
\begin{equation}
\nabla_{\btheta} \ell_k = (\sigma(\btheta^T \bx_k) - y_k) \bx_k.
\label{eq:logistic-regression-gradient}
\end{equation}
Here, $\btheta$ represents the Elo ratings and the data points (game outcomes), $\bx_k$, contain all zeros except 1 for the player who won and -1 for the player who lost.
Now, the gradient descent update rule is:
$\btheta \leftarrow \btheta - \alpha \nabla_{\btheta} \ell_k$; the update in  Equation~\ref{eq:logistic-regression-gradient} is proportional to Equation~\ref{eq:elo-online-update}. To see this, when $i$ beats $j$ and $y_k = 1$, then
\begin{eqnarray*}
-\alpha \nabla_{\btheta} \ell_k & = & (\sigma(\btheta^T \bx_k) - 1) \bx_k \\
& = & - \alpha(\sigma(\theta_i - \theta_j) - 1) \bx_k \\
& = & + \alpha (1 - \sigma(\theta_i - \theta_j)) \bx_k,
\end{eqnarray*}
and due to the definition of $x_k$ will result in the magnitude being added to $\theta_i$ and subtracted from $\theta_j$.

How, then, does SCO relate to Elo?
As in SCO, the Elo ratings are the parameters of the model and the loss is a function of the difference in ratings passed through a logistic function. 
The key difference is the loss function used.
In Elo, the gradient update is derived from fitting a classifier to predict the probability of a win given two agents' ratings. As such, the standard negative log-likelihood is used~\cite{pml1Book}.
In SCO, the loss is the sum of sigmoids represented as $D_v$ (\eqref{eq:sco-loss} and \eqref{eq:sigmoid-discrepancy-func}) itself, specifically because it is interpreted as a (smooth) Kendall-tau distance-- the optimal point being its minimum-- rather than as a prediction.

\subsection{Relationship to Perturbed Optimizers}
\label{sec:relationship-to-perturbed-optimizers}

%

A variety of methods have been considered to produce smooth proxys of ranking operators, i.e. functions that approximate a ranking or sorting function, that are differentiable (see, e.g. \cite{Blondel20} and references therein). Indeed ranking is just one type of discrete-output operation that can be efficiently made differentiable, and we describe here the approach of \cite{Berthet20} following their notation: consider a general discrete optimization problem over a set $\cY$ that can be parameterized by an input $\btheta \in \R^m$,
for a finite set of distinct points $\cY \subset \R^m$
with $\cC$ being its convex hull, of the form:
\begin{equation}
\label{eq:max-argmax-optimization}
F(\btheta) = \max_{\by \in \cC} \by^T \btheta, \mbox{~~and~~} \by^*(\btheta) = \argmax_{\by \in \cC} \by^T \btheta.
\end{equation}
In particular, consider the case where $\cC$ is a convex polytope and the problems are linear programs. As an example, the problem of ranking a vector of $d$ entries can be cast in this fashion, by taking $C$ to be the permutahedron $\cP(\rho)$ (see below).  
The function $\by^*: \R^d \to \cY$ is piecewise constant, leading to zero or undefined gradients.
It can be smoothed by taking the mean under additive stochastic noise to the parameters: we define this proxy by the {\it perturbed maximizer}~\cite{Berthet20} for $\varepsilon > 0$ \begin{equation}
\label{eq:perturbed-argmax}
\by^*_{\varepsilon}(\btheta) = \bE_Z[y^*(\btheta + \varepsilon Z)] = \bE_Z[\argmax_{y \in \cC} \by^T (\btheta + \varepsilon Z)]\, ,
\end{equation}
where $Z$ is a random variable with a positive and differentiable density. Under these conditions, the argmax inside the expectation is almost surely unique, and $\by^*_\varepsilon$ is differentiable, with derivatives that are also expressed as a mean, and that can be estimated with Monte-Carlo methods (see \cite{Berthet20}).
Further, one can define the probability distribution $p_{\btheta}(y) = \Pr(\by^*(\btheta + \varepsilon Z) = y)$. It holds that $\by^*_{\varepsilon}(\btheta) = \bE_{p_{\btheta}(y)}[Y]$, and the {\it perturbed maximum} is
\begin{equation}
F_\varepsilon(\btheta) = \bE[F(\btheta + \varepsilon Z)] = \bE[\by^*(\btheta + \varepsilon Z)^\top (\btheta + \varepsilon Z)].
\end{equation}

The ranking operation for some vector of values $\bv$ can be seen as \textsc{Ranking}$(\bv) = \textsc{ArgSort}(\textsc{ArgSort}(\bv))$, where $\textsc{ArgSort}$ is a function that returns the indices that would sort the values. It is precisely this non-differentiable function that will be perturbed. For example, for 
$\bv = ( 0.18, -1.28, 0.65, 1.25, 0.25, -0.12 )$,
\[
\textsc{Ranking}(\bv) = (2, 0, 4, 5, 3, 1).
\]
Ranking can be cast as an LP in the form required by \eqref{eq:max-argmax-optimization}.
Let $\cP(\bw)$ be the {\it permutahedron}: the convex hull over all vectors that are permutations of $\bw \subset \R^m$. For $\rho = (m, \cdots, 1)$, a (linear) objective function is constructed that is the dot product of $-\btheta$ and $\by \in \cP(\rho)$.
The fundamental theorem of linear programming ensures that the solution of an LP is vertex of it polytope, unique for almost all inputs, corresponding to a permutation in the case of a permutahedron~\cite{Berthet20}.
As an example, suppose the ratings for agents $1, 2, 3, 4$ are $(10, 20, 30, 40)$. Then the maximum objective would be when $\by = \rho$ with a maximal value of $(4, 3, 2, 1) \cdot (-10, -20, -30, -40)$.
These methods learn such ranking functions from data via gradient descent. Indeed $\by_\varepsilon^*(\btheta)$ can be plugged in directly into any loss $\ell$ comparing it to a vector of ranks $\by_{\text{true}}$. Further, the formalism of {\it Fenchel-Young losses}~\cite{blondel2020learning} can be applied to this setting: they are particularly convenient to use with perturbed optimizer, as their gradient is given by
\[
\nabla_\theta \ell_{\text{FY}}(\theta; \by_{\text{true}}) = \by_\varepsilon^*(\theta) - \by_{\text{true}}\, .
\]

As detailed in our main text, this can be extended naturally to partial rankings, and constructing stochastic gradients based on batches of these rankings. The connection to SCO is clear. In perturbed optimizers, the user explicitly chooses the distribution of $Z$ (\eg Gumbel or Normal) and fits parameters $\btheta$ to the data under the model given by \eqref{eq:perturbed-argmax}.
In Young's formalization of Condorcet's model, the noise follows the Mallows model where the probability of a data point is proportional to its Kendall-tau distance to the ground truth rankings.

\section{Additional Results and Hyper-Parameter Values}
\label{app:results}

\subsection{Full SCO Results on the Failure Mode of Elo}
\label{app:full-warmup-results}

Table~\ref{tab:warmup-experiment} shows the full results for of all hyperparameters used for the warmup experiment described in Section~\ref{sec:eval-warmup}.

\begin{table}[t!]
\begin{tabular}{llll}
\hline
Full vs. Stochastic Gradient Descent & $\alpha$ & $\tau$ & $T_{conv}$ \\
\hline
GD & $0.01$ & $0.5$ & $289$ \\
GD & $0.01$  & $1$ & $1158$ \\
GD & $0.01$  & $2$ & $4661$ \\
GD & $0.1$ & $0.5$ & $28$ \\
GD & $0.1$  & $1$ & $115$ \\
GD & $0.1$  & $2$ & $463$ \\
\hline
SGD ($|B| = 2$) & $0.01$ & $0.5$ & $17.0$ \\
SGD ($|B| = 2$) & $0.01$  & $1$ & $1003.33$ \\
SGD ($|B| = 2$) & $0.01$  & $2$ & $4778.33$ \\
SGD ($|B| = 2$) & $0.1$ & $0.5$ & $25.00$ \\
SGD ($|B| = 2$) & $0.1$  & $1$ & $60.66$ \\
SGD ($|B| = 2$) & $0.1$  & $2$ & $401.66$ \\
\hline
\end{tabular}
\caption{Number of iterations to reach convergence to the optimal ranking on the example in Equation~\ref{eq:example}. $\alpha$ is the learning rate and $\tau$ is the temperature of the logistic function (Equation~(\ref{eq:sigmoid-discrepancy-func})).
$T_{conv}$ is the number of iterations required to converge to the optimal ranking; for SGD this is an average over 3 seeds.
\label{tab:warmup-experiment}}
\end{table}

\subsection{Kemeny-Young Approximation Quality}
\label{sec:kemeny-approx-qual}

For the results listed across PrefLib instances from Table~\ref{tab:preflib}: 
\begin{itemize}
    \item For sigmoidal programming, we used {\tt maxiters} of 10000. The tolerance parameter was first set to 0.0001. If a numerical instability occurred, then the problem was retried with a tolerance of 0.001 and then if that failed again, then 0.01 was used.
    \item For instances where $m \le 10$, SGD with the sigmoid loss used batch size of 32, 10000 iterations, $\alpha= 0.01$, and temperature $\tau = 1$.
    \item For instances where $11 \le m \le 500$, SGD with the sigmoid loss used batch size 32, 100000 iterations, $\alpha = 0.01$, and temperature $\tau = 1$.
    \item For instances where $m \ge 501$, SGD with the sigmoid loss used batch size 32, 100000 iterations, $\alpha = 0.1$, and temperature $\tau = \frac{1}{2}$.

\end{itemize}

\subsection{Approximation of Bayesian Posterior}
\label{app:eval-bayesian-posterior}

\begin{figure}[t]
    \centering
    \includegraphics[width=0.3\textwidth]{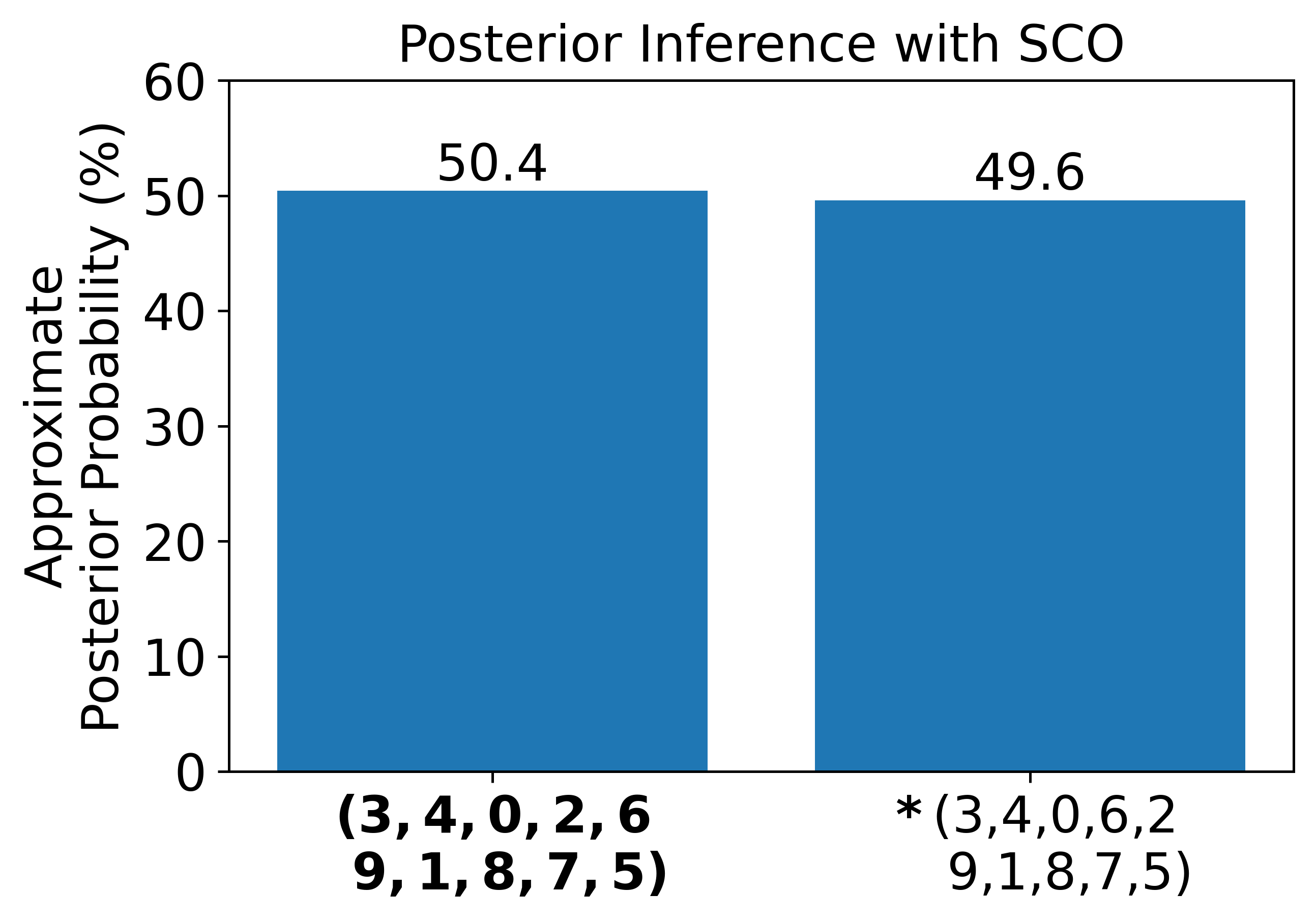}
    \caption{Posterior inference via Algorithm~\ref{alg:sco} on a synthetic evaluation problem. Contests between $4$ participants are randomly drawn from a pool of $10$. The ranking of the final rating $\theta_{T}$ is denoted with an asterisk (\textbf{*}). The true skill ranking is labeled in bold. True skills ratings, to two decimal places, are $[126.46, 106.00, 114.68, 133.61, 128.01, 85.34, 114.25, 97.73, 98.45,$ $106.16]$. All rankings sampled from the SGD-as-posterior-inference process are labeled in the bar plot with their corresponding observed frequencies as probabilities displayed above them. Notice that the mode of the posterior matches the true skill ranking and improves upon the final iterate.}
    \label{fig:sco_posterior}
\end{figure}


Prior work~\citep{mandt2016variational,stephan2017stochastic} has shown how stochastic gradient descent (SGD) can be modified to go beyond an MLE point estimate and approximately generate samples from the Bayes posterior of a probabilistic model. 
Under assumptions that SGD has converged to a region around a local minimum where the loss is quadratic and gradients are normally distributed, a new optimal step size can be computed from gradient samples. 

Here, we show this using Algorithm~\ref{alg:sco} with a constant learning rate, on simulated rank data generated similarly to the skill-matched distribution in Section~\ref{sec:eval-sparse-data}, but with $m = 10$. Once SCO has converged to a region, a new optimal step size is computed using gradient samples.
As SGD continues with this new step size, it traverses the stationary distribution of a Markov chain and produces iterates that well approximate samples from the true posterior distribution (see Figure~\ref{fig:sco_posterior}). While the posterior over ratings $\btheta$ is assumed Gaussian, the induced distribution over \emph{rankings} could be multimodal. Moreover, probabilities over rankings can provide uncertainty estimates to specific alternatives in the rankings.

Figure~\ref{fig:sco_posterior} displays the posterior over rankings. In this case, the true ratings for agents $2$ and $6$ are $114.68$ and $114.25$ respectively. These are the most closely rated agents in the game. The posterior distribution reflects the uncertainty in how to rank these two agents when observing noisy outcomes of their contests.

\subsection{Online Performance}
\label{app:online-eval}

We now evaluate the performance of SCO in the online setting. In the online setting, the data is presented to each method as a stream: one data point at a time. The ranking method must update its parameters incrementally using only the one data point.
The online form of SCO is obtained simply by running one stochastic gradient descent update on the data point (batch of size 1).
One advantage of Elo is that its online update involves updating only the ratings of players involved in the game that was played, so it can be employed in a decentralized way after each game.
SCO shares this advantage because the gradients of the batch $\nabla_{\btheta} L(B)$ involve only those players that were present in $B$.
For SCO, we use the same setup and evaluation metric as in Section~\ref{sec:eval-diplomacy}, except using batch size 1 and only one single pass over training data points, leading to a total number of iterations $T = |\cD_R| = 28049$. 
We run 50 such experiments with different seeds corresponding to 50 different orderings of the data.
The results are shown in Figure~\ref{fig:online_elo_vs_sco}.

\begin{figure}[t]
    \centering
    \includegraphics[scale=0.6]{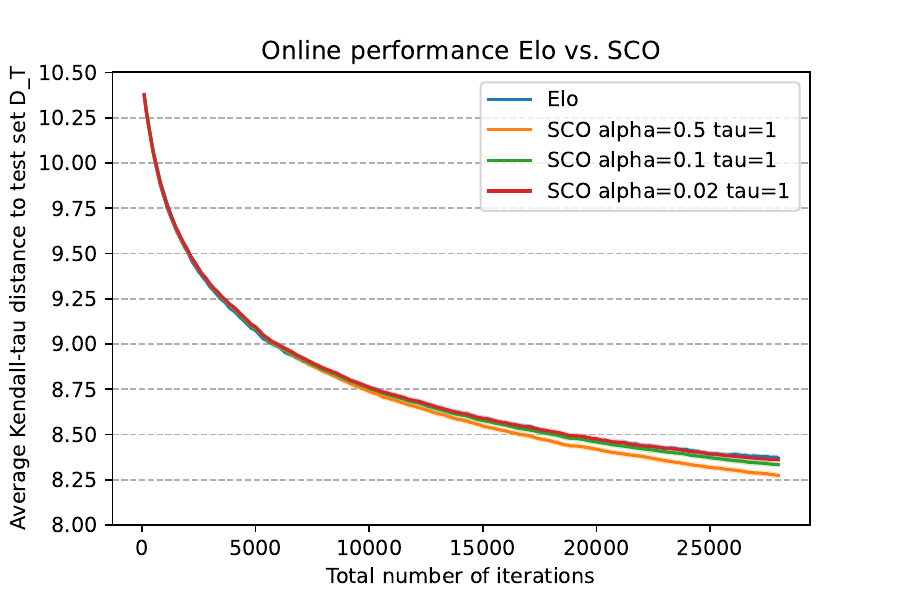}
    \caption{Online performance of SCO. Bands around each line represent 95\% confidence intervals over 50 runs.}
    \label{fig:online_elo_vs_sco}
\end{figure}

We tried $\alpha \in \{ 0.5, 0.2, 0.1, 0.02, 0.01 \}$ and temperatures $\tau \in \{ 0.5, 1, 2 \}$. The results were quite comparable, so we show only the extremes in variation for $\tau = 1$. In this case SCO with $\alpha = 0.02$ most closely resembles Elo while SCO with $\alpha = 0.1$ preforms slightly better than Elo and $\alpha = 0.5$ significantly better.

\end{document}